\documentclass[sigconf]{acmart}

\usepackage{graphicx}
\usepackage{caption} 
\usepackage{algorithm}
\usepackage{algorithmic}
\usepackage{amsmath}

\usepackage{amssymb}
\usepackage{multirow}
\usepackage[table]{xcolor}
\usepackage{enumitem}
\usepackage{amsthm}
\usepackage{lineno}

\usepackage{url}
\usepackage{graphicx}

\usepackage{tabularx}
\usepackage{makecell}
\newcolumntype{Y}{>{\raggedright\arraybackslash}X}

\usepackage{colortbl, array}

\newcommand{\eci}{\ensuremath{\mathrm{ECI}_{\mathrm{sem}}}}

\setlist[itemize]{nosep,topsep=2pt,partopsep=0pt}
\newtheorem{proposition}{Proposition}
\usepackage{booktabs}

\usepackage{mdframed}
\usepackage{listings}
\usepackage{xcolor}

\definecolor{promptbg}{HTML}{F8F9FA}
\definecolor{promptframe}{HTML}{4A90D9}
\definecolor{prompttitle}{HTML}{2C3E50}
\definecolor{codekw}{HTML}{0070C1}
\definecolor{codestr}{HTML}{A31515}
\definecolor{codecomment}{HTML}{6A9955}

\lstdefinestyle{promptstyle}{
    basicstyle=\small\ttfamily,
    keywordstyle=\color{codekw}\bfseries,
    stringstyle=\color{codestr},
    commentstyle=\color{codecomment}\itshape,
    breaklines=true,
    breakatwhitespace=true,
    columns=fullflexible,
    keepspaces=true,
    showstringspaces=false,
    frame=none,
    aboveskip=0pt,
    belowskip=0pt,
}

\mdfdefinestyle{promptbox}{
    backgroundcolor=promptbg,
    linecolor=promptframe,
    linewidth=2pt,
    leftline=true,
    rightline=false,
    topline=false,
    bottomline=false,
    innerleftmargin=12pt,
    innerrightmargin=12pt,
    innertopmargin=10pt,
    innerbottommargin=10pt,
    skipabove=\baselineskip,
    skipbelow=\baselineskip,
}

\AtBeginDocument{%
  }

\setcopyright{acmlicensed}
\copyrightyear{2018}
\acmYear{2018}
\acmDOI{XXXXXXX.XXXXXXX}
\acmConference[Conference acronym 'XX]{Make sure to enter the correct
  conference title from your rights confirmation email}{June 03--05,
  2018}{Woodstock, NY}
\acmISBN{978-1-4503-XXXX-X/2018/06}

\begin{document}

\title{\eci{}: Semantic Residual Effective Contrastive Information for Evaluating Hard Negatives}

\author{Aarush Sinha}
\email{aarush.sinha@gmail.com}
\affiliation{%
  \institution{University of Copenhagen}
  \city{Copenhagen}
  \country{Denmark}
}

\author{Rahul Seetharaman}
\email{rahulseetharaman@gmail.com}
\affiliation{
  \institution{Independent Researcher}
  \country{USA}
}

\author{Aman Bansal}
\email{18aman.bansal@gmail.com}
\affiliation{
  \institution{Independent Researcher}
  \country{USA}
}
\renewcommand{\shortauthors}{Sinha et al.}

\begin{abstract}
Hard-negative source selection for dense retrieval is usually decided only
after fine-tuning and downstream evaluation. We propose \eci{}, a
validity-weighted diagnostic that ranks candidate hard-negative sources using
frozen target-encoder embeddings. \eci{} is training-free, not label-free:
each scored example requires a query, a labeled positive, and an explicit
candidate negative. Each negative is weighted by target consistency, semantic
locality, and lexical residuality, and the source score aggregates weighted
residual directions through a log-determinant. At the scale we study, the
log-determinant operates in its linear regime, so the score reduces to the
mean validity weight; we report this reduction as a finding, and gate ablations show that a
single lexical scalar, mean lexical residuality, reproduces the transfer
ordering in every block. On MS
MARCO negative sources, in-family \eci{} recovers the aggregate BEIR transfer
ordering across DistilBERT, E5-base, and Contriever. Constructed controls expose an
identifiability limit: a plausible false negative is non-inverted, near the
positive, and lexically novel, exactly the profile the gates reward, and
across twelve weightings of the three gates and three backbones only one of
thirty-six cells is safe, and it is not transfer aligned, a limit that extends to
margin-style filters built from the same statistics. The same gate statistics
separate the controls sharply when used as audit features rather than as
rewards, so we pair the score with a reported audit profile instead of a
reweighting. Downstream evaluation remains the final
test.
\end{abstract}

\begin{CCSXML}
<ccs2012>
   <concept>
       <concept_id>10002951.10003317.10003359.10003362</concept_id>
       <concept_desc>Information systems~Retrieval effectiveness</concept_desc>
       <concept_significance>500</concept_significance>
       </concept>
   <concept>
       <concept_id>10002951.10003317.10003338.10003342</concept_id>
       <concept_desc>Information systems~Similarity measures</concept_desc>
       <concept_significance>500</concept_significance>
       </concept>
   <concept>
       <concept_id>10002951.10003317.10003359.10003363</concept_id>
       <concept_desc>Information systems~Retrieval efficiency</concept_desc>
       <concept_significance>300</concept_significance>
       </concept>
 </ccs2012>
\end{CCSXML}

\ccsdesc[500]{Information systems~Retrieval effectiveness}
\ccsdesc[500]{Information systems~Similarity measures}
\ccsdesc[300]{Information systems~Retrieval efficiency}


\maketitle

\section{Introduction}
Dense retrieval systems are commonly trained with contrastive objectives that
compare a query with labeled positives and many negatives
\cite{karpukhin-etal-2020-dense,xiong2021approximate,qu2021rocketqaoptimizedtrainingapproach,zhan2021optimizing,robinson2021contrastive}.
Hard negatives matter because they are close enough to the query to shape the
representation, yet the best source is rarely known before training. Candidates
include BM25 retrieval \cite{robertson2009probabilistic}, dense mining,
reranking, diversity-aware sampling
\cite{moreira2024nvretriever,yang2022trisampler}, LLM generation
\cite{li2024synegllmdrivensynthetichardnegatives,sinha2025dontretrievegenerateprompting},
and structured evidence \cite{sinha2025bicaeffectivebiomedicaldense}. Comparing
them by training separate models and evaluating on benchmarks is expensive and
gives little insight into why one source helps more than another.

The central difficulty is that hardness is not usefulness. A negative with high
lexical overlap, high dense similarity, or high contrastive loss may be
informative, but it may also be a false negative, a near-duplicate of the
positive, an artifact of query-term matching, or redundant with the batch
\cite{bonifacio2022inpars,chuang2020debiased}, and the problem is sharper for
generated passages that look plausible while violating relevance assumptions. A
pre-training diagnostic should therefore ask not whether a negative is hard but
whether it is valid supervision: target consistent, semantically local, and
lexically nontrivial. \eci{} deliberately contains no hardness term and presumes
its inputs are candidate hard negatives, a scope condition we state in
Section~\ref{sec:method_eci} and document in Section~\ref{sec:controls}.

We introduce \textbf{\eci{}} (semantic residual effective contrastive
information), a diagnostic for ranking hard-negative sources before fine-tuning.
\eci{} uses a frozen copy of the target encoder, so scores are computed in the
same representation geometry that will initialize training. For each query-positive-negative triplet,
\eci{} measures whether the encoder still prefers the
positive over the negative, whether the negative is closer to the positive than
to the query, and whether query-token overlap explains the match. These gates weight semantic residual directions, and the source score
aggregates the weighted directions through a log-determinant. In principle
the log-determinant rewards non-redundant coverage; in the regime we study it
operates linearly and reduces to the mean validity weight, a reduction we
verify and report (Section~\ref{sec:linear_regime}).

\eci{} is intended as a screening criterion, not as a replacement for downstream
evaluation. It is training-free because it does not require optimization,
trained checkpoints, or benchmark labels at scoring time. It is not label-free:
each scored record must contain a query, at least one labeled positive passage,
and explicit candidate negatives. The theoretical analysis connects \eci{} to
expected loss reduction only under local linearization assumptions, and the
empirical sections test whether the diagnostic is useful in practice.

Contributions:
\begin{itemize}
\item We define \eci{}, a validity-weighted statistic over frozen
target-encoder residual directions whose target-consistency factor is exactly
the two-document restriction of the MNRL softmax at the training temperature.
\item We show that at our scale the log-determinant form of the score
operates in its linear regime and reduces to the mean validity weight, and
that a single lexical scalar, mean $\psi$, reproduces the transfer ordering
in every block; we report both as findings rather than presenting the volume
objective as the operative mechanism.
\item We construct negative-source controls, including a false-negative
control built from labeled positives of other queries, and establish an
identifiability limit: the control outranks every real mined source on all
three backbones, the strongest mined source ranks only seventh of fourteen,
and across twelve weightings of the gates, one-sided, two-sided,
hardness-based, and uniform, exactly one of thirty-six cells is safe, and it
reproduces only two of six transfer orderings, with implications for
margin-based negative filtering beyond our score. Safety therefore ships as an
audit that uses the same statistics as features rather than as rewards.
\item We evaluate BM25, dense, LLM, and hybrid sources for MS MARCO
fine-tuning: in-family \eci{} orderings match aggregate BEIR transfer in all
six backbone and setting blocks, at roughly 15\% of the downstream compute.
\end{itemize}

\section{Related Work}

Surveys organize dense-retrieval negative sampling into random, statically
mined, dynamically mined, and synthetic sources
\cite{wischounig2026negativesampling}. Dual-encoder retrieval, from DPR
\cite{karpukhin-etal-2020-dense} onward, relies on contrastive supervision and
large negative sets \cite{gao2021simcse,xiong2021approximate}, and pretraining
methods such as Condenser, Contriever, GTR, and E5 tie quality to contrastive
signal structure
\cite{gao2021condenserpretrainingarchitecturedense,izacard2022contriever,ni2022gtr,wang2024textembeddingsweaklysupervisedcontrastive}.

Negative selection is therefore central. BM25 and dense mining define
different notions of hardness
\cite{robertson2009probabilistic,zhan2021optimizing}; ANCE, STAR, and ADORE
refresh or reweight difficult examples
\cite{xiong2021approximate,zhan2021optimizing}; and TAS-B, TriSampler, and
SimANS seek informative negatives unlikely to be false
\cite{hofstatter2021efficientlyteachingeffectivedense,yang2022trisampler,zhou2022simans}.
False negatives bias contrastive gradients
\cite{robinson2021contrastive,chuang2020debiased}, and recent work regularizes
or filters them \cite{wang2023mitigating}. Section~\ref{sec:controls} explains
why that is hard: under frozen-encoder statistics, false-negative-shaped
passages outrank every genuine mined source, so no reweighting of those
statistics is both safe and transfer aligned. Hardness alone is an
incomplete proxy for training value.

LLMs supply synthetic supervision, from InPars, GPL, and Promptagator
\cite{bonifacio2022inpars,wang2022gplgenerativepseudolabeling,dai2023promptagator}
to SyNeg and prompt-based negative generation
\cite{li2024synegllmdrivensynthetichardnegatives,sinha2025dontretrievegenerateprompting},
though such negatives may violate relevance assumptions without filtering. Our
log-determinant connects to determinantal point processes
\cite{kulesza2012determinantal}, to D-optimal experimental design, where it
measures posterior uncertainty reduction
\citep{chaloner1995bayesian,pukelsheim2006optimal}, to active data selection
and dataset valuation \citep{mackay1992information,foster2019variational}, to
submodular information collection \citep{krause2014submodular}, and to
covariance-volume objectives against representation collapse
\citep{bardes2022vicreg}. We use it as a training-free diagnostic over
weighted residual directions rather than as a probabilistic subset sampler,
and find it operates in its linear regime at our scale, so the ranking signal
comes from the validity weights (Section~\ref{sec:linear_regime}).

\section{Methodology}

\subsection{Hard Negative Mining}

All mining methods use approximately $400{,}000$ query-passage pairs from
MS MARCO \cite{bajaj2018msmarcohumangenerated} via
\texttt{Tevatron/msmarco-passage}\footnote{\url{https://huggingface.co/datasets/Tevatron/msmarco-passage}}.

\paragraph{BM25} We index the corpus with BM25S\cite{lù2024bm25sordersmagnitudefaster} and keep the top $K=10$ non-positive passages per query.

\paragraph{Dense Retriever Mining} We construct dense hard negatives using
\texttt{msmarco-MiniLM-L6-v3}. Queries and passages are encoded with frozen
normalized embeddings. Labeled positives are removed, and the top $K=10$
remaining passages by cosine similarity are retained:

Formally, for query embedding $e_q$ and passage embedding $e_d$,
\begin{equation}
    s(q,d)=\cos(e_q,e_d),
\end{equation}
and
\begin{equation}
    \mathcal{N}_{\mathrm{dense}}(q)
    =
    \operatorname{TopK}_{d \in \mathcal{C}\setminus\mathcal{P}(q)}
    s(q,d),
\end{equation}
where $\mathcal{C}$ is the passage corpus and $\mathcal{P}(q)$ is the set of
labeled positives.

\paragraph{LLM} We use \texttt{Qwen3-30B-A3B-Thinking-2507}
\cite{yang2025qwen3technicalreport} with vLLM \cite{kwon2023efficient} on two
NVIDIA L40s GPUs. Given a query and positive document, the prompt asks for ten
JSON-formatted negatives that share topical cues, appear initially relevant,
and then omit or contradict the true information need while remaining plausible
and diverse.

\subsection{\eci{}: Semantic Residual Effective Contrastive Information}
\label{sec:method_eci}

We compute \eci{} with a frozen copy of the target
encoder, matching downstream query/document formatting, maximum sequence
length, embedding normalization, and temperature.

Let $\mathcal D$ be a candidate file. Each record contains a query $q_j$, a
labeled positive passage $p_j$, and a deduplicated hard-negative set
\begin{equation}
    \mathcal N_j=\{n_{j,1},\ldots,n_{j,K_j}\}.
\end{equation}
When multiple positives are present, we use the first labeled positive. All
explicit negatives are scored, and Eq.~\eqref{eq:eci_information_matrix} averages
over them, so files with different $K_j$ are scored per negative rather than by
raw file size. Section~\ref{sec:results} states the scope within which we compare
these scores across sources. We assume
$N_{\mathcal D}=\sum_j K_j>0$ and exclude files with no explicit negatives.
Thus \eci{} is suitable for supervised or weakly supervised negative-source
selection, not for fully unlabeled corpora with no relevance information.

Let $f_0$ be the frozen target encoder. We apply the same formatting used in
training, including \texttt{query:} and \texttt{passage:} prefixes for E5
models. The unit-normalized embeddings are
\begin{equation}
\begin{aligned}
    u_j &= f_0(q_j), \\
    v_j^+ &= f_0(p_j), \\
    v_{j,k}^- &= f_0(n_{j,k}), \\
    \|u_j\|_2 &= \|v_j^+\|_2 = \|v_{j,k}^-\|_2 = 1 .
\end{aligned}
\end{equation}

For each triplet $(q_j,p_j,n_{j,k})$, we define three scalar gates. The
pairwise target-consistency gate is
\begin{equation}
    \rho_{j,k}
    =
    \sigma\!\left(
        \frac{u_j^\top v_j^+ - u_j^\top v_{j,k}^-}{\tau}
    \right),
    \label{eq:eci_pairwise_win}
\end{equation}
where $\tau = 0.05$ corresponds to the logit scale $1/\tau = 20$ used by default
in MNRL training, so the diagnostic and the fine-tuning objective share a
temperature. The relative
semantic-locality gate is
\begin{equation}
    \eta_{j,k}
    =
    \sigma\!\left(
        \frac{(v_j^+)^\top v_{j,k}^- - u_j^\top v_{j,k}^-}{\tau}
    \right).
    \label{eq:eci_semantic_neighborhood}
\end{equation}
The relative form is deliberate. An absolute proximity gate on
$\|v_j^+-v_{j,k}^-\|$ would also admit near-duplicates of the positive that
carry no new contrastive direction, while a gate on $u_j^\top v_{j,k}^-$ alone
would reward query-side topical overlap. Conditioning on the difference
$(v_j^+)^\top v_{j,k}^- - u_j^\top v_{j,k}^-$ keeps negatives that are close to
the positive \emph{relative to} the query, i.e., plausible confusions in
document space rather than lexical or topical shortcuts. Like $\rho$ and $\psi$,
$\eta$ is a heuristic gate, and we treat its robustness as an empirical
question (§\ref{sec:training_free_ablations}).
To reduce lexical shortcuts, let $T(x)$ be normalized alphanumeric tokens and
let $\operatorname{idf}(t)$ be a nonnegative smoothed IDF weight computed over a
fixed candidate scoring corpus. The lexical coverage and residual factors are
\begin{align}
    C(q_j,n_{j,k})
    &=
    \frac{
        \sum_{t\in T(q_j)\cap T(n_{j,k})}\operatorname{idf}(t)
    }{
        \sum_{t\in T(q_j)}\operatorname{idf}(t)
    },
    \\
    \psi_{j,k}
    &= 1 - C(q_j,n_{j,k}).
    \label{eq:eci_lexical_residual}
\end{align}
If the denominator is zero, we set $C(q_j,n_{j,k})=0$, so
$\psi_{j,k}=1$.

The semantic residual direction is
\begin{equation}
    r_{j,k}
    =
    \frac{v_j^+ - v_{j,k}^-}{\|v_j^+ - v_{j,k}^-\|_2},
    \label{eq:eci_pairwise_residual}
\end{equation}
with the convention $r_{j,k}=0$ when $v_j^+=v_{j,k}^-$. The semantic residual
information matrix is
\begin{equation}
    \widehat{\mathcal{I}}_{\mathcal{D}}^{\mathrm{sem}}
    =
    \frac{1}{N_{\mathcal{D}}}
    \sum_j
    \sum_{k=1}^{K_j}
    \rho_{j,k}\eta_{j,k}\psi_{j,k}
    r_{j,k}r_{j,k}^{\top},
    \label{eq:eci_information_matrix}
\end{equation}
where $N_{\mathcal D}=\sum_j K_j$ is the number of explicit negatives.

The score is
\begin{equation}
    \eci(\mathcal D)
    =
    \log\det\!\left(
        I_d+
        \widehat{\mathcal{I}}_{\mathcal{D}}^{\mathrm{sem}}
    \right).
    \label{eq:eci_final}
\end{equation}
We also report
\begin{equation}
    \eci(\mathcal D)/d,
    \label{eq:eci_final_per_dim}
\end{equation}
where $d$ is the embedding dimension. Higher scores indicate a larger mean validity weight over the file's residual
directions under a fixed per-negative budget. Two properties should be stated
up front. First, \eci{} contains no hardness term: $\rho$ increases as
negatives get easier, so the score presumes its inputs are candidate hard
negatives produced by a miner or generator, and it is not designed to detect
trivially easy or false-negative contaminated sources;
Section~\ref{sec:controls} documents this scope condition and shows that no
reweighting of the gates closes it. Second, although Eq.~\eqref{eq:eci_final} is a
log-determinant, at our scale every eigenvalue of
$\widehat{\mathcal I}^{\mathrm{sem}}_{\mathcal D}$ is far below one, so the
score operates in the linear regime where
$\log\det(I_d + \widehat{\mathcal I}) \approx
\operatorname{tr}(\widehat{\mathcal I}) = \overline{w}$, the mean validity
weight (zero-residual triplets contribute zero to both sides).
Section~\ref{sec:linear_regime} verifies this numerically.

The three gates are intended to separate usefulness from raw hardness.
$\rho_{j,k}$ downweights negatives that the frozen target encoder already ranks
above the labeled positive, since these cases may reflect false negatives,
annotation noise, or target-model uncertainty. Non-inverted false negatives,
which the encoder scores confidently below the labeled positive, are not
downweighted by $\rho$; Section~\ref{sec:controls} quantifies this.
$\eta_{j,k}$ favors negatives
that remain near the positive passage in document space, rather than examples
that are merely close to the query by a lexical or topical shortcut.
$\psi_{j,k}$ discounts negatives whose apparent difficulty is largely explained
by query-token overlap. The log-determinant then aggregates the remaining weighted residual
directions; in principle it rewards new contrastive coverage over repetition,
and Section~\ref{sec:linear_regime} reports that at our scale it reduces to
the mean of the weights.

For interpretability, we report mean target-consistency
$\rho_{j,k}$, semantic-locality $\eta_{j,k}$, lexical coverage
$C(q_j,n_{j,k})$, residual weight $\psi_{j,k}$, pairwise loss
$-\log\rho_{j,k}$, and inversion rate $\mathbf{1}[\rho_{j,k}<0.5]$. These are
diagnostics only; the ranking score is Eq.~\eqref{eq:eci_final}.

\subsection{Theoretical Motivation}
\label{sec:theory_summary}

We summarize the structural properties that motivate the design; full statements
and proofs are in App.~\ref{sec:theory}. First, $\rho_{j,k}$ is not an arbitrary
hardness weight: it is exactly the probability assigned to the labeled positive
under the two-document restriction of the same temperature-scaled softmax used
by MNRL, so $-\log\rho_{j,k}$ is the two-document restriction of the
contrastive loss (App.~\ref{sec:pairwise_target_consistency}), and it
downweights inverted or mislabeled negatives rather than amplifying them.

Second, $\widehat{\mathcal{I}}_{\mathcal{D}}^{\mathrm{sem}}$ is a nonnegatively
weighted sum of rank-one outer products and is positive semidefinite, giving
$\eci(\mathcal D)=\sum_{\ell=1}^{d}\log(1+\lambda_\ell)\ge 0$, and the
log-determinant is redundancy sensitive in principle: for fixed total weight,
spreading weight across non-collinear directions scores strictly higher than
concentrating it (Prop.~\ref{prop:eci_structural_properties}). In our
experiments every eigenvalue is far below one and the score is numerically
indistinguishable from its trace, so redundancy sensitivity is inactive at this
scale; Section~\ref{sec:linear_regime} quantifies this.

Third, the link to optimization is local. Under a first-order expansion around
initialization the pairwise query-gradient is proportional to
$(1-\rho_{j,k})\,\delta_{j,k}$
(Prop.~\ref{prop:pairwise_gradient_information}), so raw gradient energy scales
with $(1-\rho)^2$ and favors low-$\rho$ negatives, including inverted ones.
\eci{} replaces that weight with $\rho_{j,k}\,\eta_{j,k}\,\psi_{j,k}$, trading
raw hardness for target-consistent, semantically local, lexically residual
signal. A one-step descent argument and an idealized Gaussian
information-volume interpretation link the log-determinant to expected loss
reduction and reduced posterior uncertainty, but only under local linearization
(App.~\ref{sec:loss_reduction_information_volume}); these are motivations, not
guarantees, so we pair them with downstream evaluation. Section~\ref{sec:controls} tests this choice against constructed controls
and shows that of twelve weightings of the three gates, including two-sided and
hardness-based forms, none is both safe on degenerate or false-negative-shaped
sources and aligned with transfer; safety is therefore handled by an audit
outside the score.

\section{Fine-Tuning}

Non-hybrid runs use BM25, Dense, and Qwen3-30B files with ten negatives per
record, yielding $\sim 4$M flattened triplets from $\sim 400$k MS MARCO
queries. Hybrid runs use BM25+Dense, BM25+LLM, and Dense+LLM files with
$K=20$, yielding $\sim 8$M triplets per source.

We train with CachedMultipleNegativesRankingLoss~\cite{reimers-2019-sentence-bert},
a GradCache MNRL implementation~\cite{gao2021scaling}. We use
query\_to\_doc, BatchSamplers.NO\_DUPLICATES, batch size
$B=4096$, and GradCache mini-batches
$B_{\mathrm{enc}}=\min(2048,B)$.

For a batch of flattened triplets $\{(q_i,p_i,n_i)\}_{i=1}^{B}$, the \texttt{query\_to\_doc} MNRL denominator pools all positives and explicit negatives:
\begin{equation}
\begin{aligned}
\mathcal{L}_{\mathrm{MNRL}}
&=
-\frac{1}{B} \sum_{i=1}^{B}
\log
\frac{\exp(s(q_i,p_i)/\tau)}
{\sum_{d\in\mathcal{D}_{B}} \exp(s(q_i,d)/\tau)} \\
\mathcal{D}_{B}
&=
\{p_j\}_{j=1}^{B}\cup\{n_j\}_{j=1}^{B}.
\end{aligned}
\label{eq:fine_tuning_mnrl}
\end{equation}
We use \(\tau=0.05\), equivalently a logit scale of \(1/\tau=20\), matching the \eci{} temperature.

DistilBERT~\cite{sanh2020distilbertdistilledversionbert},
E5-base~\cite{wang2024textembeddingsweaklysupervisedcontrastive}, and
Contriever~\cite{izacard2022contriever} are trained for one epoch with maximum
sequence length $128$, learning rate $2\times10^{-5}$, and warmup ratio $0.1$.
Mixed precision uses bfloat16 or fp16 via AMP when available.

Following standard practice in dense retrieval studies~\cite{moreira2024nvretriever,wang2022gplgenerativepseudolabeling,karpukhin-etal-2020-dense,sinha2025bicaeffectivebiomedicaldense}, we report single-seed results.

\subsection{Computational Cost}
\label{sec:computational_cost}

In a logged Contriever run, scoring all six sources with \eci{} on one RTX6000 Pro
took $8.5$ GPU-hours wall-clock, including $8.3$ GPU-hours for scoring. The
corresponding downstream pipeline took $17.9$ GPU-hours for fine-tuning and
$38.9$ for BEIR evaluation. \eci{} therefore produced the source ranking at roughly $15\%$ of the measured
downstream compute budget, and Section~\ref{sec:ablation_sample_size_stability}
shows the ranking is unchanged on a $25\%$ subsample, which would reduce that
share further.

This figure covers one backbone and all six sources; per-source cost is
dominated by encoder forward passes over the $N_{\mathcal D}$ scored negatives
and is essentially linear in $N_{\mathcal D}$, the log-determinant accumulation
being negligible beside encoding. The hardness proxies in
Table~\ref{tab:ablation_baseline_diagnostics} are cheaper but do not reproduce
the transfer-aligned ranking, while mean $\psi$ does and needs no encoder at all
(Section~\ref{sec:linear_regime}); the encoder cost therefore buys the audit
statistics of Section~\ref{sec:controls} in the target geometry rather than the
ranking signal itself.

\paragraph{Reproducibility.}
We release the scoring implementation, the six negative-source files, the seeded
record samples used in Section~\ref{sec:training_free_ablations}, and the
fine-tuning and evaluation configurations at \url{ANONYMIZED}. All backbones, the
mining models, and the generation model are publicly available checkpoints.

\section{Results}
\label{sec:results}

Table~\ref{tab:eci-lexdebias-full} reports in-family
\eci{} scores computed with the same backbone used for
fine-tuning. This in-family setting is the natural choice because \eci{} evaluates source
quality in the geometry that will later receive the contrastive update; the ranking is robust to a different frozen encoder (§\ref{sec:training_free_ablations}). Across
DistilBERT, E5-base, and Contriever, LLM negatives receive the highest
non-hybrid scores. The pattern indicates that the generated negatives contain
valid contrastive directions that are less lexically explained than the BM25
or dense-only alternatives. Among hybrid sources,
Dense+LLM is consistently highest, suggesting complementarity between dense
retrieval and generation: dense mining supplies local retrieved confusions,
while generation adds broader residual directions.

\paragraph{Comparability scope.}
Because $\widehat{\mathcal{I}}_{\mathcal{D}}^{\mathrm{sem}}$ averages over explicit
negatives, \eci{} values are comparable across files of different sizes and $K_j$,
but our alignment claims are stated within setting: hybrid runs double both the
per-record negative count ($K=20$ versus $K=10$) and the number of flattened
triplets, so cross-setting differences in transfer confound source composition
with training-data volume.\footnote{Pooling all six sources per backbone gives
Spearman correlations of $0.94$ (DistilBERT), $0.60$ (E5-base), and $0.89$
(Contriever) against mean nDCG@10; we report these for completeness but do not
treat them as a target quantity.} One cross-setting comparison does run against
that confound: non-hybrid LLM attains both the highest \eci{} score and the
highest mean nDCG@10 (Table~\ref{tab:ndcg10-beir}) among all six sources for
every backbone, despite half as many flattened triplets as any hybrid source.

\begin{table}[!ht]
    \centering
    \small
    \setlength{\tabcolsep}{2pt}
    \renewcommand{\arraystretch}{0.8}
    \caption{Training-free \eci{} scores, computed in family, that is, with a frozen
    copy of the same backbone that is later fine-tuned, over all explicit negatives in
    each source file. Bold marks the best score within each block.}
    \label{tab:eci-lexdebias-full}
    \begin{tabular}{lllcc}
      \toprule
      \textbf{Model} & \textbf{Setting} & \makecell{\textbf{Mining} \\ \textbf{method}} & $\mathbf{ECI}_{\mathrm{sem}} \uparrow$ & \makecell{$\mathbf{ECI}_{\mathrm{sem}}$ \\ / dim.$\uparrow$} \\
      \midrule
      \multirow{6}{*}{DistilBERT}
        & \multirow{3}{*}{Non-hybrid}
        & BM25  & 0.1265 & 0.000165 \\
        & & Dense & 0.1755 & 0.000228 \\
        & & LLM   & \textbf{0.2612} & \textbf{0.000340} \\
      \cmidrule(lr){2-5}
        & \multirow{3}{*}{Hybrid}
        & BM25+Dense & 0.1510 & 0.000197 \\
        & & BM25+LLM   & 0.1939 & 0.000252 \\
        & & Dense+LLM  & \textbf{0.2192} & \textbf{0.000285} \\
      \midrule
      \multirow{6}{*}{E5-base}
        & \multirow{3}{*}{Non-hybrid}
        & BM25  & 0.0954 & 0.000124 \\
        & & Dense & 0.1191 & 0.000155 \\
        & & LLM   & \textbf{0.2112} & \textbf{0.000275} \\
      \cmidrule(lr){2-5}
        & \multirow{3}{*}{Hybrid}
        & BM25+Dense & 0.1072 & 0.000140 \\
        & & BM25+LLM   & 0.1533 & 0.000200 \\
        & & Dense+LLM  & \textbf{0.1659} & \textbf{0.000216} \\
      \midrule
      \multirow{6}{*}{Contriever}
        & \multirow{3}{*}{Non-hybrid}
        & BM25  & 0.0667 & 0.000087 \\
        & & Dense & 0.1170 & 0.000152 \\
        & & LLM   & \textbf{0.1813} & \textbf{0.000236} \\
      \cmidrule(lr){2-5}
        & \multirow{3}{*}{Hybrid}
        & BM25+Dense & 0.0918 & 0.000120 \\
        & & BM25+LLM   & 0.1240 & 0.000161 \\
        & & Dense+LLM  & \textbf{0.1498} & \textbf{0.000195} \\
      \bottomrule
    \end{tabular}
\end{table}

\begin{table*}[!ht]
    \centering
    \scriptsize
    \setlength{\tabcolsep}{1.2pt}
    \renewcommand{\arraystretch}{0.85}
    \caption{BEIR nDCG@10 after MS MARCO fine-tuning. Results are grouped by
    backbone and by non-hybrid versus hybrid mining. Mean averages the listed
    datasets. Bold marks the best score within each block and column.}
    \label{tab:ndcg10-beir}
    \resizebox{\textwidth}{!}{%
    \begin{tabular}{lllcccccccccccc}
      \toprule
      \multirow{2}{*}{\textbf{Model}} &
      \multirow{2}{*}{\textbf{Setting}} &
      \multirow{2}{*}{\textbf{Mining}} &
      \multicolumn{11}{c}{\textbf{Dataset}} &
      \multirow{2}{*}{\textbf{Mean}} \\
      \cmidrule(lr){4-14}
      & & & ArguAna & NFCorpus & Quora & SciDocs & SciFact &
      \shortstack{TREC\\COVID} &
      \shortstack{Webis\\Touche} & FiQA & NQ & FEVER &
      \shortstack{Climate\\FEVER} & \\
      \midrule
      \multirow{6}{*}{DistilBERT}
        & \multirow{3}{*}{Non-hybrid}
        & BM25  & 0.328 & 0.250 & 0.827 & 0.132 & 0.414 & 0.490 & 0.151 & 0.104 & 0.342 & 0.606 & 0.141 & 0.344 \\
        & & Dense & \textbf{0.464} & \textbf{0.266} & \textbf{0.840} & 0.125 & \textbf{0.492} & 0.452 & 0.159 & 0.211 & 0.315 & 0.620 & 0.164 & 0.373 \\
        & & LLM   & 0.456 & 0.262 & 0.836 & \textbf{0.135} & 0.488 & \textbf{0.493} & \textbf{0.190} & \textbf{0.221} & \textbf{0.368} & \textbf{0.668} & \textbf{0.197} & \textbf{0.392} \\
      \cmidrule(lr){2-15}
        & \multirow{3}{*}{Hybrid}
        & BM25+Dense & 0.369 & 0.258 & 0.833 & 0.122 & 0.424 & 0.500 & 0.141 & 0.108 & 0.330 & 0.601 & 0.150 & 0.349 \\
        & & BM25+LLM   & 0.357 & 0.258 & 0.829 & \textbf{0.130} & 0.435 & \textbf{0.520} & 0.155 & 0.124 & \textbf{0.366} & 0.626 & 0.145 & 0.359 \\
        & & Dense+LLM  & \textbf{0.452} & \textbf{0.269} & \textbf{0.839} & 0.126 & \textbf{0.493} & 0.479 & \textbf{0.172} & \textbf{0.211} & 0.343 & \textbf{0.631} & \textbf{0.179} & \textbf{0.381} \\
      \midrule
      \multirow{6}{*}{E5-base}
        & \multirow{3}{*}{Non-hybrid}
        & BM25  & \textbf{0.538} & 0.322 & 0.750 & 0.172 & 0.600 & 0.728 & \textbf{0.265} & 0.158 & 0.456 & \textbf{0.757} & 0.225 & 0.452 \\
        & & Dense & 0.521 & 0.332 & \textbf{0.848} & 0.165 & 0.616 & 0.716 & 0.243 & \textbf{0.326} & 0.456 & 0.728 & 0.167 & 0.465 \\
        & & LLM   & 0.502 & \textbf{0.336} & 0.846 & \textbf{0.177} & \textbf{0.618} & \textbf{0.754} & 0.255 & 0.319 & \textbf{0.483} & \textbf{0.757} & \textbf{0.227} & \textbf{0.479} \\
      \cmidrule(lr){2-15}
        & \multirow{3}{*}{Hybrid}
        & BM25+Dense & \textbf{0.525} & 0.328 & 0.729 & 0.161 & 0.590 & 0.711 & 0.264 & 0.183 & 0.446 & 0.714 & 0.176 & 0.439 \\
        & & BM25+LLM   & 0.518 & 0.328 & 0.739 & \textbf{0.168} & 0.597 & \textbf{0.727} & \textbf{0.280} & 0.180 & \textbf{0.464} & \textbf{0.736} & \textbf{0.211} & 0.450 \\
        & & Dense+LLM  & 0.487 & \textbf{0.331} & \textbf{0.822} & 0.162 & \textbf{0.603} & 0.720 & 0.251 & \textbf{0.315} & 0.457 & 0.716 & 0.184 & \textbf{0.459} \\
      \midrule
      \multirow{6}{*}{Contriever}
        & \multirow{3}{*}{Non-hybrid}
        & BM25  & 0.376 & 0.292 & 0.840 & 0.155 & 0.442 & 0.467 & 0.147 & 0.149 & 0.390 & 0.453 & 0.125 & 0.349 \\
        & & Dense & \textbf{0.524} & \textbf{0.320} & \textbf{0.859} & 0.140 & 0.589 & 0.439 & 0.117 & 0.253 & 0.345 & 0.631 & 0.152 & 0.397 \\
        & & LLM   & 0.506 & \textbf{0.320} & 0.858 & \textbf{0.159} & \textbf{0.598} & \textbf{0.476} & \textbf{0.189} & \textbf{0.287} & \textbf{0.433} & \textbf{0.721} & \textbf{0.225} & \textbf{0.434} \\
      \cmidrule(lr){2-15}
        & \multirow{3}{*}{Hybrid}
        & BM25+Dense & 0.398 & 0.300 & 0.849 & 0.132 & 0.404 & 0.456 & 0.114 & 0.144 & 0.369 & 0.416 & 0.092 & 0.334 \\
        & & BM25+LLM   & 0.398 & 0.294 & 0.852 & \textbf{0.151} & 0.405 & \textbf{0.492} & \textbf{0.158} & 0.162 & \textbf{0.431} & 0.492 & 0.119 & 0.359 \\
        & & Dense+LLM  & \textbf{0.496} & \textbf{0.320} & \textbf{0.860} & 0.138 & \textbf{0.569} & 0.456 & 0.143 & \textbf{0.258} & 0.371 & \textbf{0.647} & \textbf{0.164} & \textbf{0.402} \\
      \bottomrule
    \end{tabular}
    }
\end{table*}

Table~\ref{tab:ndcg10-beir} reports BEIR~\cite{thakur2021beir} zero-shot
nDCG@10 after MS MARCO fine-tuning. Higher in-family
\eci{} aligns with stronger aggregate transfer. LLM
mining gives the best non-hybrid mean nDCG@10 for DistilBERT (0.392), E5-base
(0.479), and Contriever (0.434). The highest-scoring hybrid source also has the
best mean downstream performance for each backbone. We claim only that \eci{} provides a useful training-free source ranking for
aggregate transfer, not that it predicts every dataset column.

Across all six backbone and setting blocks, the full \eci{} ordering matches
the mean BEIR nDCG@10 ordering: all 18 within-block pairwise comparisons
agree. At its true resolution this is two distinct source orderings, one
non-hybrid and one hybrid, each replicated identically across the three
backbones rather than six independent successes; the replication is informative,
but the blocks are not independent trials, and
Section~\ref{sec:limitations} discusses the limits of single-seed runs. The
mechanism is visible in the gate-level diagnostics: LLM negatives are not the
hardest by raw-loss proxies. Table~\ref{tab:ablation_baseline_diagnostics} shows
a lower MNRL proxy and lower ambiguity than BM25 or dense negatives in every
backbone. They score highest on \eci{} because they carry fewer target
inversions ($\rho<0.5$ at $2.1\%$ versus $16.8\%$ for BM25 and $33.5\%$ for
dense; Table~\ref{tab:eta_psi_failure_buckets}) and lower query-token coverage
(mean $C=0.47$ versus $0.71$ for BM25;
Table~\ref{tab:tokenizer_idf_sensitivity}), which raise $\rho$ and $\psi$;
Section~\ref{sec:linear_regime} quantifies which factor carries the ranking.

The aggregate alignment does not imply per-dataset dominance. On E5-base
non-hybrid, BM25 wins ArguAna ($0.538$ vs LLM $0.502$) and Touche ($0.265$ vs
$0.255$) and dense wins Quora and FiQA, even though LLM has the highest mean;
on DistilBERT and Contriever non-hybrid, dense wins ArguAna and Quora. ArguAna
and Touche reward exact-match and lexical cues, which $\psi$ downweights, so a
source that wins by lexical overlap on a lexical-friendly dataset can still
lose on aggregate transfer. A practitioner targeting a single dataset should
still evaluate on that dataset.

\section{Training-Free Ablation Experiments}
\label{sec:training_free_ablations}

We evaluate robustness to training-free design choices using $25{,}000$ valid
records per source, uniform sampling, and seed $42$.

Unless stated otherwise, ablations use frozen \texttt{prdev/mini-gte}
embeddings\footnote{\url{https://huggingface.co/prdev/mini-gte}} with maximum
sequence length $128$, normalized embeddings, first-positive aggregation, and
$\tau=0.05$. The baseline comparison in
Section~\ref{sec:ablation_baseline_diagnostics} is the exception: it uses in-family
encoders so that every diagnostic is computed in the geometry that initializes the
corresponding fine-tuning run. No fine-tuning is performed.

\subsection{Baseline Diagnostics}
\label{sec:ablation_baseline_diagnostics}

Table~\ref{tab:ablation_baseline_diagnostics} compares
\eci{} with simpler training-free diagnostics on the
same $25{,}000$ seeded records per source. So that encoder mismatch is not
attributed to the scoring criterion, every diagnostic uses the same frozen
encoder family that initializes the corresponding fine-tuning run.

The local proxies measure different notions of hardness. Mean
$\|g_{j,k}\|_2^2$ estimates local gradient energy, the MNRL proxy the
two-document contrastive loss induced by the candidate negative, and ambiguity
the fraction of records in which at least one of the $K$ negatives outranks
the labeled positive, a record-level any-of-$K$ statistic. It is therefore
larger than the per-pair inversion rate used in
Section~\ref{sec:controls}: for E5-base BM25 the two are $0.53$ and $0.16$,
and the record-level value falling below its independence prediction of
$0.82$ indicates that inversions cluster within records. All three primarily reward difficult or inverted examples, whereas
\eci{} weights residual directions by validity before aggregating them, an
aggregation that reduces at our scale to the mean of those weights
(Section~\ref{sec:linear_regime}).

\begin{table*}[!ht]
    \centering
    \small
    \setlength{\tabcolsep}{3pt}
    \renewcommand{\arraystretch}{0.9}
    \caption{Target-backbone training-free diagnostics using frozen pre-fine-tuning
    encoders, $\tau=0.05$, first-positive aggregation, and lexical residuality ($\psi$).
    \eci{} is the log-determinant score; local proxies
    are computed on the same records. \textbf{Bold} marks the best value within each diagnostic column, model, and setting.
    Scores are computed on $25{,}000$ seeded records per source and therefore differ
    slightly from the full-file scores in Table~\ref{tab:eci-lexdebias-full}.}
    \label{tab:ablation_baseline_diagnostics}
    \begin{tabular}{lllcccc}
        \toprule
        \textbf{Model} & \textbf{Setting} & \textbf{Source} &
        \textbf{\eci{}} &
        \textbf{Mean $\|g_{j,k}\|_2^2$} &
        \textbf{MNRL proxy} & \textbf{Ambiguity (any-of-$K$)} \\
        \midrule
        \multirow{6}{*}{DistilBERT}
            & \multirow{3}{*}{Non-hybrid}
            & BM25  & 0.1282 & 27.59 & 2.416 & 0.881 \\
            & & Dense & 0.1760 & 22.96 & \textbf{2.434} & \textbf{0.883} \\
            & & LLM   & \textbf{0.2610} & \textbf{29.83} & 2.184 & 0.733 \\
        \cmidrule(lr){2-7}
            & \multirow{3}{*}{Hybrid}
            & BM25+Dense & 0.1521 & 25.41 & \textbf{3.077} & \textbf{0.934} \\
            & & BM25+LLM   & 0.1947 & \textbf{28.66} & 2.955 & 0.911 \\
            & & Dense+LLM  & \textbf{0.2193} & 25.55 & 2.958 & 0.915 \\
        \midrule
        \multirow{6}{*}{E5-base}
            & \multirow{3}{*}{Non-hybrid}
            & BM25  & 0.0965 & \textbf{39.86} & 1.790 & 0.533 \\
            & & Dense & 0.1193 & 39.61 & \textbf{2.107} & \textbf{0.689} \\
            & & LLM   & \textbf{0.2108} & 29.56 & 1.031 & 0.122 \\
        \cmidrule(lr){2-7}
            & \multirow{3}{*}{Hybrid}
            & BM25+Dense & 0.1079 & \textbf{42.96} & \textbf{2.584} & \textbf{0.715} \\
            & & BM25+LLM   & 0.1537 & 42.68 & 2.060 & 0.547 \\
            & & Dense+LLM  & \textbf{0.1657} & 40.86 & 2.297 & 0.690 \\
        \midrule
        \multirow{6}{*}{Contriever}
            & \multirow{3}{*}{Non-hybrid}
            & BM25  & 0.0671 & \textbf{162.68} & 2.573 & 0.801 \\
            & & Dense & 0.1170 & 155.55 & \textbf{2.719} & \textbf{0.840} \\
            & & LLM   & \textbf{0.1807} & 91.90 & 1.033 & 0.344 \\
        \cmidrule(lr){2-7}
            & \multirow{3}{*}{Hybrid}
            & BM25+Dense & 0.0921 & \textbf{166.39} & \textbf{3.326} & \textbf{0.882} \\
            & & BM25+LLM   & 0.1240 & 163.00 & 2.710 & 0.811 \\
            & & Dense+LLM  & \textbf{0.1494} & 155.74 & 2.829 & 0.845 \\
        \bottomrule
    \end{tabular}
\end{table*}

Across all three backbones \eci{} ranks LLM highest among non-hybrid sources
and Dense+LLM highest among hybrid sources, matching the strongest mean BEIR
transfer pattern in Table~\ref{tab:ndcg10-beir}. The simpler proxies select
dense-heavy sources instead: MNRL and ambiguity favor Dense in every
non-hybrid block and BM25+Dense in every hybrid block, while gradient energy
is less consistent. The \eci{} ranking is therefore not reducible to raw
hardness, contrastive loss, or ambiguity rates; the source-level signal comes
from the validity weights, as Section~\ref{sec:linear_regime} makes precise.

\subsection{Linear Regime and Gate Ablations}
\label{sec:linear_regime}
Two questions determine which part of the score does the work: whether the log-determinant differs from its trace at this scale,
and which gates carry the ranking signal.

\paragraph{The score operates in its linear regime.}
For the full score, $\log\det(I_d + \widehat{\mathcal I})$ and
$\operatorname{tr}(\widehat{\mathcal I}) = \overline{w}$ agree to within
$0.21\%$ relative (median $0.03\%$); across every gate ablation the gap stays
below $0.9\%$, and the two give identical orderings in every block. This is
arithmetic: with unit-norm residuals and $1/N_{\mathcal D}$ averaging every
eigenvalue of $\widehat{\mathcal I}$ is of order
$\overline{w}/d \approx 10^{-4}$, so $\log(1+\lambda) \approx \lambda$
throughout. Only margin-hardness weighting leaves this regime, its weights
being large and its two forms differing by roughly $15\%$, which confirms the
reduction is a property of the validity weights rather than the residual
geometry.

\paragraph{The reduction is invariant, not incidental.}
To test whether a diversity-active regime exists at all, we score
$\log\det(I_d + c\,\widehat{\mathcal I})$ on a 48-point geometric grid from
$c = 1$ to $c = 325.5$, the ratio $N_{\mathcal D}/d$ for the non-hybrid
samples; the hybrid ratio is twice that. Calling a point diversity active once
$c\,\lambda_{\max}$ exceeds one, where $\log(1+\lambda)$ departs from $\lambda$
by over thirty percent, every block reaches that regime: at the top of the grid
$c\,\lambda_{\max}$ lies between $1.16$ and $6.51$. Under the full weighting
the ordering is invariant at every tested $c$ in every block (0 of 288 sweep
rows deviate), and likewise with $\psi$ removed. Under uniform weights the
ordering changes in 152 of 288 rows and flips in four of six blocks, first
crossing at $c = 1.13$, $1.13$, $1.64$, and $65.71$. The volume term can
therefore reorder sources when spectra differ, and the sweep shows why it does
not here: the validity weights shrink the spectral radius by a median factor of
$7.1$ (range $3.9$ to $16.6$ over eighteen cells) and homogenize spectral
shape, so the weighted score stays trace-like at every admissible
normalization.\footnote{Under uniform weights the mean weight is identically
one for every source, so the trace carries no information and the ordering
rides on spectral shape alone; this is why only the unweighted variant is
sensitive to $c$.} Invariance under rescaling is distinct from matching the
transfer ordering, which Table~\ref{tab:gate_ablations} reports.

\paragraph{Which gates are load-bearing.}
Table~\ref{tab:gate_ablations} gives blocks matched and Pearson correlations
(protocol in the caption; small differences in $r$ are not meaningful); since
the full ordering matches transfer in every block, agreement with it coincides
with agreement with transfer. The trace
form is numerically the mean validity weight and matches the full score
exactly. Dropping $\rho$ or $\eta$ changes no block ordering, while dropping
$\psi$ breaks two. Among scalar aggregates, mean $\psi$ alone, a single
IDF-weighted coverage statistic computed without any encoder, reproduces the
ordering in every block; mean $\rho$, mean $\eta$, and the inversion rate match
none, and the hardness statistics correlate negatively with transfer. Among
curated sources, then, those whose difficulty is least explained by query-term
overlap transfer best on aggregate, which also explains the per-dataset
exceptions, since ArguAna and Touch\'e reward the lexical overlap $\psi$
discounts. The encoder-based gates are removable here without measurable cost;
Section~\ref{sec:controls} asks whether any weighting adds safety outside this
family, and finds that none of twelve does.

\begin{table}
\caption{Gate, weighting, and safety ablations at $25{,}000$ records. Blocks
counts the backbone and setting blocks, out of six, in which the variant
reproduces the full \eci{} ordering; $r$ is the Pearson correlation with mean
BEIR nDCG@10, computed within each backbone over its three non-hybrid (NH) or
hybrid (H) sources and averaged over backbones. Safe means all eight controls
of Section~\ref{sec:controls} rank below all six real sources (D: DistilBERT,
E: E5-base, C: Contriever); the trace form induces the same orderings as the
full score, so its verdict coincides, and n/a marks variants outside the
safety grid or without a component counterpart. One of thirty-six grid cells
is safe.}
\label{tab:gate_ablations}
\label{tab:safety_alignment}
\centering
\small
\setlength{\tabcolsep}{3.5pt}
\renewcommand{\arraystretch}{0.92}
\begin{tabular}{l c c c c}
\toprule
Variant & Safe (D/E/C) & Blocks & $r$ (NH) & $r$ (H) \\
\midrule
Full \eci{} (log-determinant) & no/no/no & 6/6 & 0.93 & 0.95 \\
Trace form (mean $w$) & no/no/no & 6/6 & 0.93 & 0.95 \\
Drop $\rho$ ($\eta\psi$) & no/no/no & 6/6 & 0.98 & 0.97 \\
Drop $\eta$ ($\rho\psi$) & no/no/no & 6/6 & 0.89 & 0.92 \\
Drop $\psi$ ($\rho\eta$) & no/no/no & 4/6 & 0.78 & 0.85 \\
Mean $\psi$ only & n/a & 6/6 & 0.96 & 0.96 \\
Mean $\rho$ only & n/a & 0/6 & 0.62 & 0.69 \\
Mean $\eta$ only & n/a & 0/6 & 0.12 & 0.03 \\
Inversion rate & n/a & 0/6 & $-0.58$ & $-0.67$ \\
Uniform weights (residual only) & no/\textbf{yes}/no & 2/6 & 0.61 & 0.72 \\
Gradient $(1-\rho)^2\eta\psi$ & no/no/no & 2/6 & 0.14 & 0.04 \\
Margin-hardness & n/a & 0/6 & $-0.55$ & $-0.71$ \\
$\rho^2(1-\rho)$ & no/no/no & 2/6 & $-0.23$ & $-0.29$ \\
$\rho^2(1-\rho)\eta$ & no/no/no & 0/6 & $-0.08$ & $-0.14$ \\
$\rho(1-\rho)\psi$ & no/no/no & 2/6 & 0.35 & 0.24 \\
$\rho(1-\rho)\eta\psi$ & no/no/no & n/a & n/a & n/a \\
$\rho\eta\psi\,\mathbf{1}[\rho<0.99]$ & no/no/no & n/a & n/a & n/a \\
$\rho\eta\psi\,\mathbf{1}[\rho<0.999]$ & no/no/no & n/a & n/a & n/a \\
\bottomrule
\end{tabular}
\end{table}

\subsection{Negative-Source Controls: An Identifiability Limit}
\label{sec:controls}
\eci{} presumes candidate hard negatives. To delimit that scope we inject
eight constructed controls alongside the six real sources and score all
fourteen identically on $25{,}000$ seeded records per source with each
in-family encoder: unrelated random passages; an easy source built by
shuffling negatives across unrelated records; two corrupted-label sources, one
replacing half the negatives with the record's own positive and one replacing
each negative with the labeled positive of another query chosen by nearest
neighbor to the record's own positive; and a ladder mixing $25$, $50$, $75$, and
$100$ percent random passages into BM25. The nearest-neighbor source is
false-negative shaped by construction: real passages from the labeled-positive
pool, placed where unlabeled relevant passages concentrate. A weighting is safe
only if all eight controls rank below all six real sources.

The paper's weighting is not safe, and not marginally
(Table~\ref{tab:control_gates}). The strongest mined source, LLM, ranks seventh
of fourteen on every backbone, and on Contriever the false-negative control
takes first place outright. Only the own-positive corruption ranks last, for a
mechanical reason: an exact duplicate of the positive has zero residual
direction. Each gate is implicated. $\rho$ rewards degeneracy, more so on
stronger encoders: mean $\rho$ for random passages is $0.77$, $0.98$, and
$0.99$ across DistilBERT, E5-base, and Contriever against $0.53$, $0.68$, and
$0.60$ for real BM25 negatives. $\psi$ both drives and breaks the ranking, at
$0.943$ for random passages against $0.287$ for BM25, identical across
backbones since it needs no encoder, and along the mixing ladder it rises
$0.42$, $0.62$, $0.81$, $0.94$ with \eci{} in lockstep. $\eta$ is highest of
all fourteen sources on a corrupted-label control on every backbone, which is
what it is built to do, since a positive of another query lies on the positive
manifold. Non-inverted, near the positive, and lexically novel is at once the
profile of a valuable hard negative and of a plausible false negative.

No reweighting closes the gap. Table~\ref{tab:safety_alignment} adds the
safety verdict for all twelve weightings, and of thirty-six backbone and
weighting cells exactly one is safe: uniform weights on E5-base, which
reproduce two of six block orderings and so cannot be used. The false-negative
control causes thirty-three of the thirty-five failures and is the sole offender
in thirteen. Two-sided validity, $\rho^2(1-\rho)$ and variants, demotes the
degenerate controls on all three backbones yet still fails on the
false-negative control alone every time, while correlating negatively with
transfer, $r = -0.23$ and $-0.29$. Dropping $\rho$ fails identically to keeping
it, so $\rho$ is not a false-negative guard a reweighting could preserve. Nor do
small-sample verdicts replicate: the three schemes passing a $5{,}000$ record
Contriever pilot all fail at $25{,}000$ under the six-source bar. This is an
identifiability limit, not a tuning failure.

The resolution is to stop asking the score to do two jobs and to state the scope
condition plainly: \eci{} ranks curated mining sources and assumes label quality
is already known. The statistics that fail as rewards do separate the controls
as audit features: corrupted-label sources carry the highest mean $\eta$ of any
source while carrying far less inversion mass than real mined negatives, $0.26$,
$0.05$, and $0.12$ against $0.44$, $0.16$, and $0.38$ for BM25. We therefore
pair the score with the per-file audit profile specified below, generalizing the
failure buckets of Table~\ref{tab:eta_psi_failure_buckets}, and treat a flagged
file as a prompt for external evidence such as a cross-encoder spot check rather
than something the weighting can absorb. Threshold-style filters built from the
same frozen-encoder margins \cite{moreira2024nvretriever,wang2023mitigating}
inherit the same limit.

\paragraph{The audit, specified and demonstrated.}
The profile uses four statistics the scorer already emits: $\overline{\rho}$,
$\overline{\eta}$, $\overline{\psi}$, and the inversion rate $\Pr[\rho<0.5]$.
Thresholds are calibrated on the six real sources only, which define the normal
envelope; no control enters calibration. A file is flagged if
$\overline{\rho}$, $\overline{\eta}$, or $\overline{\psi}$ exceeds the largest
value any real source attains, or if its inversion rate falls below the
smallest. Table~\ref{tab:audit} applies the rule to all fourteen sources: seven
of eight controls are flagged, no real source is, and $\overline{\eta}$ alone
catches both corrupted-label sources while $\overline{\rho}$,
$\overline{\psi}$, and the inversion rate catch the degenerate ones. Flags and
the single miss are identical on DistilBERT and E5-base. The miss is the $25$
percent mix, whose statistics all sit inside the real envelope because three
quarters of the file is genuine BM25: the audit catches wholesale substitution
and false-negative injection, not light dilution. A per-negative variant (extreme-$\rho$ mass within an $\eta$ band) may do
better but needs dumps this artifact does not retain; the rule is validated
against constructed contamination only.

\begin{table}
\caption{Selected sources ranked by \eci{} under the paper weighting at
$25{,}000$ records, with rank out of fourteen in parentheses. Six controls
outrank the strongest mined source on every backbone. Omitted ranks belong to
the random-mixing ladder; rank 2 on every backbone is its $100$ percent
endpoint, an independent draw from the same distribution as the
random-passage control. Values are printed to three decimals; near-ties among
the degenerate controls resolve only at higher precision (ranks 1 to 3 on
DistilBERT span $0.6017$ to $0.6022$).}
\label{tab:control_gates}
\centering
\small
\setlength{\tabcolsep}{3.5pt}
\begin{tabular}{l c c c}
\toprule
Source & DistilBERT & E5-base & Contriever \\
\midrule
Easy shuffled & 0.602 (1) & 0.444 (3) & 0.443 (4) \\
Random passages & 0.602 (3) & 0.444 (1) & 0.444 (3) \\
False-negative control & 0.395 (5) & 0.360 (5) & \textbf{0.500 (1)} \\
LLM, best real & 0.261 (7) & 0.211 (7) & 0.181 (7) \\
BM25, weakest real & 0.128 (13) & 0.097 (13) & 0.067 (13) \\
Corrupted own positive & 0.064 (14) & 0.048 (14) & 0.034 (14) \\
\bottomrule
\end{tabular}
\end{table}

\begin{table}
\caption{The audit profile on all fourteen sources, in-family Contriever at
$25{,}000$ records. Thresholds come from the six real sources only: flag if
$\overline{\rho}>0.870$, $\overline{\eta}>0.443$, $\overline{\psi}>0.529$, or
inversion $<0.083$. Seven of eight controls are flagged and no real source is.}
\label{tab:audit}
\centering
\scriptsize
\setlength{\tabcolsep}{3pt}
\renewcommand{\arraystretch}{0.92}
\begin{tabular}{lccccl}
\toprule
Source & $\overline{\rho}$ & $\overline{\eta}$ & $\overline{\psi}$ & inv. & Flags \\
\midrule
BM25 & 0.602 & 0.352 & 0.287 & 0.378 & none \\
Dense & 0.566 & 0.443 & 0.386 & 0.423 & none \\
LLM & 0.870 & 0.375 & 0.529 & 0.083 & none \\
BM25+Dense & 0.584 & 0.398 & 0.337 & 0.400 & none \\
BM25+LLM & 0.736 & 0.364 & 0.408 & 0.230 & none \\
Dense+LLM & 0.720 & 0.409 & 0.459 & 0.251 & none \\
\midrule
Random passages & 0.994 & 0.471 & 0.943 & 0.000 & $\rho$ $\eta$ $\psi$ inv \\
Easy shuffled & 0.994 & 0.469 & 0.944 & 0.000 & $\rho$ $\eta$ $\psi$ inv \\
Mix random $100\%$ & 0.993 & 0.471 & 0.943 & 0.001 & $\rho$ $\eta$ $\psi$ inv \\
Mix random $75\%$ & 0.915 & 0.447 & 0.812 & 0.076 & $\rho$ $\eta$ $\psi$ inv \\
Mix random $50\%$ & 0.798 & 0.412 & 0.615 & 0.188 & $\psi$ \\
Mix random $25\%$ & 0.680 & 0.376 & 0.418 & 0.302 & none \\
False-negative control & 0.831 & 0.753 & 0.704 & 0.116 & $\eta$ $\psi$ \\
Corrupted own positive & 0.551 & 0.675 & 0.273 & 0.352 & $\eta$ \\
\bottomrule
\end{tabular}
\end{table}

\subsection{Robustness Checks}
\label{sec:ablation_sample_size_stability}

Three further checks appear in Appendix~\ref{sec:appendix_robustness} with
their tables and figure. \eci{} rankings are unchanged at $25\%$, $50\%$,
$75\%$, and $100\%$ of each seeded sample, so the ordering is not an artifact
of using the maximum available sample size
(Table~\ref{tab:sample_size_rank_stability}). Varying the lexical tokenizer and
the IDF corpus shifts scores modestly but preserves both rank orders, with
BM25-heavy sources most affected because their residual weight depends most on
the lexical coverage estimate (Table~\ref{tab:tokenizer_idf_sensitivity}); the
$\eta$ and $\psi$ failure buckets of
Table~\ref{tab:eta_psi_failure_buckets} make those gate decisions auditable
rather than opaque. Sweeping $\tau\in\{0.03,0.05,0.1\}$ changes absolute scale
but not the within-setting rank order, supporting $\tau=0.05$ as a default
(Figure~\ref{fig:temperature_sweep}).

\section{Scope and Limitations}
\label{sec:limitations}

\eci{} is a source-screening diagnostic for supervised or weakly supervised
hard-negative construction. It requires query-positive supervision and explicit
candidate negatives, so it cannot score a fully unlabeled corpus without an
additional retrieval or annotation step. It scores negatives under a frozen
encoder, which is what makes screening possible before training. What those
frozen statistics can and cannot distinguish is no longer a generic calibration
caveat: Section~\ref{sec:controls} measures the failure surface directly, shows
that no reweighting closes it, and supplies the audit that operationalizes the
boundary.

\eci{} also inherits a blind spot from its validity weighting: it contains no
hardness term, so trivially easy, unrelated, or false-negative-shaped
candidates can score highly. Section~\ref{sec:controls} demonstrates this and
establishes the stronger statement that among twelve weightings of the three
gates none is both safe on those controls and aligned with transfer; the two
requirements pull the weighting in opposite directions along the $\rho$ axis.
Practitioners should therefore treat \eci{} as a ranker over curated candidate
sources, read alongside the audit profile of Section~\ref{sec:controls}, and
route flagged files to external evidence rather than expect the score to
absorb them.

The theoretical connection to training is local: the loss-reduction argument
expands to first order around the initialization and treats frozen embedding
residuals as a proxy for parameter-space gradient structure, so it does not
guarantee global optimization improvement or per-dataset generalization. This is
why we pair it with downstream BEIR evaluation and with the failure buckets.

On the empirical side, our evidence is correlational. We report single-seed runs,
following common practice, so we do not estimate run-to-run variance; some mean
BEIR deltas are small ($0.014$ nDCG@10 between non-hybrid LLM and dense for
E5-base), so we frame the result as rank alignment rather than prediction of
score magnitudes. 

\section{Conclusion}

We introduced \eci{}, a training-free, not label-free, diagnostic that ranks
hard-negative sources by a validity-weighted aggregate of residual contrastive
directions under a frozen copy of the target encoder. At our scale the
log-determinant reduces to the mean validity weight, with a single lexical
scalar, mean $\psi$, reproducing the transfer ordering in every block. On MS
MARCO, in-family \eci{} recovers the aggregate BEIR transfer ordering across
DistilBERT, E5-base, and Contriever, and constructed controls establish an
identifiability limit: of twelve weightings of the three gates, none is both
safe on false-negative-shaped or degenerate sources and aligned with transfer,
so safety ships as an audit that uses the same statistics as features rather
than as rewards.

The result is a source-level rank alignment in the standard dense-retrieval
evaluation setting; it is not a per-dataset prediction, a mutual-information
guarantee, or a substitute for downstream evaluation, and the fine-tuning runs
use a single seed. Future work should add per-query significance testing and
multi-seed variance estimates, extend the study to further training corpora and
transfer settings and to unlabeled candidate discovery, identify data regimes in
which an active volume term changes rankings, develop stronger external
detectors for audit-flagged files, and diagnose factuality and human relevance
of generated negatives.

\bibliographystyle{ACM-Reference-Format}
\bibliography{kdd/software}

\appendix
\section{Structural Analysis and Motivation}
\label{sec:theory}

This section provides a structural analysis of \eci{}. 
Rather than offering strict optimization or distribution-free generalization guarantees, 
our goal is to motivate the design choices of the diagnostic and analyze its 
mathematical properties. \eci{} is a target-aligned diagnostic, 
not an exact full-batch MNRL gradient estimator. It uses the target encoder, formatting, 
normalization, and temperature to measure weighted residual volume from explicit negatives.

\subsection{Target Contrastive Setting}
\label{sec:target_contrastive_setting}

Each hard-negative file is flattened into triplets
\[
    (q_i,p_i,n_i),
\]
where $q_i$ is a query, $p_i$ is its labeled positive passage, and $n_i$ is one
explicit negative. A row with $K$ negatives contributes $K$ triplets, giving
per-negative comparisons across non-hybrid and hybrid sources.

Let $\phi_q(\cdot)$ and $\phi_d(\cdot)$ be the deterministic target formatting maps for
queries and passages, including role prefixes such as \texttt{query:} and
\texttt{passage:} when required by the encoder family. With frozen,
unit-normalized target-encoder embeddings,
\[
    u_i=f_0(\phi_q(q_i)),
    \qquad
    v_i^+=f_0(\phi_d(p_i)),
    \qquad
    v_i^-=f_0(\phi_d(n_i)),
\]
where
\[
    \|u_i\|_2=\|v_i^+\|_2=\|v_i^-\|_2=1 .
\]

For a target MNRL batch
\[
    B=\{(q_i,p_i,n_i): i=1,\ldots,b\},
\]
define two document pools:
\[
    D_B^0=\{p_1,\ldots,p_b\},
    \qquad
    D_B^1=\{p_1,\ldots,p_b,n_1,\ldots,n_b\}.
\]
$D_B^0$ is the positive-only baseline and $D_B^1$ adds explicit negatives.

For $s\in\{0,1\}$ and a document $d'\in D_B^s$, write
$v_{d'}=f_0(\phi_d(d'))$ for its target embedding and define
\[
    z_{i,d'}=\frac{u_i^\top v_{d'}}{\tau},
    \qquad
    \pi_i^s(d')
    =
    \frac{\exp(z_{i,d'})}{\sum_{c\in D_B^s}\exp(z_{i,c})}.
\]
The corresponding per-query target loss is
\begin{equation}
    \ell_i^s=-\log \pi_i^s(p_i),
    \qquad
    H_B^s=\sum_{i=1}^{b}\ell_i^s .
    \label{eq:baseline_and_augmented_losses}
\end{equation}
Fine-tuning uses the full in-batch softmax over $D_B^1$; \eci{} uses a scalable
pairwise restriction of the same target geometry.

\subsection{Pairwise Target-Consistency Gate}
\label{sec:pairwise_target_consistency}

Exact marginal changes from $H_B^0$ to $H_B^1$ require full in-batch
interactions. For scalable source screening, we use the frozen target encoder's
two-document judgment between the labeled positive and each explicit negative.

For a flattened triplet $(q,p,n)$, the two-document softmax probability assigned
to the labeled positive is
\begin{equation}
\begin{aligned}
    P_{\tau}(p \succ n \mid q)
    &=
    \frac{\exp(u^\top v^+/\tau)}
    {\exp(u^\top v^+/\tau)+\exp(u^\top v^-/\tau)}
    \\
    &=
    \sigma\!\left(
        \frac{u^\top v^+-u^\top v^-}{\tau}
    \right)
    \\
    &=
    \rho(q,p,n).
\end{aligned}
\label{eq:pairwise_surrogate_event}
\end{equation}
Thus $-\log\rho(q,p,n)$ is exactly the two-document restriction of the MNRL loss
for the positive and the candidate negative.

$\rho$ is a target-consistency gate, not a gradient-magnitude proxy. It is large
when the frozen retriever assigns higher probability to the labeled positive
than to the candidate negative, reducing weight on inverted or potentially
mislabeled negatives.

The semantic-locality gate is
\begin{equation}
    \eta(q,p,n)
    =
    \sigma\!\left(
        \frac{(v^+)^\top v^- - u^\top v^-}{\tau}
    \right).
    \label{eq:eci_semantic_neighborhood_theory}
\end{equation}
This heuristic favors negatives closer to the labeled positive passage than to
the query embedding under the frozen representation. It is a relative locality
gate, not an absolute nearest-neighbor claim.

To reduce sparse-retrieval shortcuts, we add a lexical-residual factor. Let
$T(x)$ denote the set of normalized alphanumeric tokens of a text $x$. We
compute nonnegative smoothed IDF weights over a fixed candidate scoring corpus,
for example
\[
    \operatorname{idf}(t)
    =
    \log\frac{M+1}{\operatorname{df}(t)+1}+1,
\]
where $M$ is the number of texts in the candidate scoring corpus and
$\operatorname{df}(t)$ is the number of such texts containing token $t$; under
this smoothing, $\operatorname{idf}(t)\ge 1$ for every token, so nonnegativity
holds automatically. The IDF-weighted lexical coverage of the query by the
negative and the associated lexical-residual factor are
\begin{equation}
    C(q,n)
    =
    \frac{
        \sum_{t\in T(q)\cap T(n)} \operatorname{idf}(t)
    }{
        \sum_{t\in T(q)} \operatorname{idf}(t)
    },
    \qquad
    \psi(q,n)=1-C(q,n).
    \label{eq:eci_lexical_residual_theory}
\end{equation}
When the denominator is zero, we set $C(q,n)=0$ and $\psi(q,n)=1$. Under this
convention, $C(q,n)\in[0,1]$ and $\psi(q,n)\in[0,1]$.

Combining the three factors gives
\[
    w(q,p,n)
    =
    \rho(q,p,n)\,\eta(q,p,n)\,\psi(q,n).
\]
$w$ is the target-consistent, semantically local, lexically residual weight of
an explicit negative.

\subsection{Semantic Residual Information}
\label{sec:semantic_residual_information}

The residual direction associated with a triplet is
\begin{equation}
    r(q,p,n)
    =
    \frac{v^+ - v^-}{\|v^+ - v^-\|_2}.
    \label{eq:eci_pairwise_residual_theory}
\end{equation}
If $v^+=v^-$, set $r(q,p,n)=0$.

For a candidate file $\mathcal D$, the semantic residual information matrix is
\begin{equation}
    \widehat{\mathcal{I}}_{\mathcal{D}}^{\mathrm{sem}}
    =
    \frac{1}{N_{\mathcal D}}
    \sum_{(q,p,n)\in\mathcal D}
    w(q,p,n)\,
    r(q,p,n)r(q,p,n)^\top ,
    \label{eq:eci_information_matrix_theory}
\end{equation}
where $N_{\mathcal D}$ is the number of explicit negatives in the flattened
candidate file. The final diagnostic is
\begin{equation}
    \eci(\mathcal D)
    =
    \log\det\!\left(
        I_d+
        \widehat{\mathcal{I}}_{\mathcal{D}}^{\mathrm{sem}}
    \right).
    \label{eq:eci_final_theory}
\end{equation}
The log-determinant rewards non-redundant residual volume rather than raw
hardness in its nonlinear regime; Section~\ref{sec:linear_regime} shows that
our experiments operate in the linear regime, where the score reduces to the
mean weight.

\begin{proposition}[Structural properties of \eci{}]
\label{prop:eci_structural_properties}
For a fixed frozen encoder, query/document formatting, temperature, tokenization
rule, IDF corpus, and candidate file, assume that embeddings are unit-normalized,
IDF weights are nonnegative, $N_{\mathcal D}>0$, and residual directions use
the convention $r(q,p,n)=0$ when $v^+=v^-$. Then:

\begin{enumerate}
    \item $\widehat{\mathcal{I}}_{\mathcal{D}}^{\mathrm{sem}}$ is positive
    semidefinite.

    \item $\eci(\mathcal D)\ge 0$, with equality if and
    only if $\widehat{\mathcal{I}}_{\mathcal{D}}^{\mathrm{sem}}=0$.

    \item The factor $\rho(q,p,n)$ is exactly the positive probability under the
    two-document restriction of the same temperature-scaled query-document
    softmax used by MNRL.

    \item For fixed total residual weight, the log-determinant objective assigns
    strictly larger value when positive weight is spread across at least two
    non-collinear residual directions than when the same total weight is
    concentrated in a single unit direction. More generally, additional
    linearly independent weighted residual directions contribute positive
    higher-order volume terms.
\end{enumerate}
\end{proposition}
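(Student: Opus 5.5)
The four items are essentially independent, so I would prove them in order, using only the definitions in Eqs.~\eqref{eq:eci_information_matrix_theory}--\eqref{eq:eci_final_theory} together with the gate definitions.

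For item 1, I would first check that every weight is nonnegative. We have $\rho,\eta\in(0,1)$ because they are sigmoids. We have $\psi\in[0,1]$ because IDF weights are nonnegative: the numerator of $C$ is a sub-sum of the denominator, so $C\in[0,1]$, and the zero-denominator convention gives $\psi=1$. Hence $w\ge 0$. Then for any $x\in\mathbb R^d$,
\[
x^\top\widehat{\mathcal I}x=\frac{1}{N_{\mathcal D}}\sum w\,(r^\top x)^2\ge 0,
\]
and the convention $r=0$ simply contributes zero terms.

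For item 2, I would diagonalize the symmetric PSD matrix with eigenvalues $\lambda_\ell\ge 0$, so that $\eci(\mathcal D)=\sum_\ell\log(1+\lambda_\ell)$. Each term is nonnegative, which gives $\eci(\mathcal D)\ge 0$. Equality holds iff every $\lambda_\ell=0$, and this is equivalent to $\widehat{\mathcal I}=0$ by symmetry.

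Item 3 is the algebraic identity $e^a/(e^a+e^b)=\sigma(a-b)$ with $a=u^\top v^+/\tau$ and $b=u^\top v^-/\tau$, exactly as displayed in Eq.~\eqref{eq:pairwise_surrogate_event}. I would just cite that computation.

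Item 4 is the only substantive part, and I would first fix a precise reading. The comparison is between $A=W\,ss^\top$ for a unit vector $s$ and $B=\sum_i a_i r_ir_i^\top$ with $a_i>0$, $\sum_i a_i=W$, and at least two of the $r_i$ non-collinear. The key steps are as follows.
\begin{itemize}
\item The two matrices have equal trace: $\operatorname{tr}A=\operatorname{tr}B=W$.
\item $A$ has a single nonzero eigenvalue $W$, so $\log\det(I+A)=\log(1+W)$.
\item $B$ has rank at least two, so its nonzero eigenvalues $\mu_1,\dots,\mu_m$ with $m\ge2$ are positive and sum to $W$.
\item Strict superadditivity, $(1+\mu_1)\cdots(1+\mu_m)>1+\sum_\ell\mu_\ell$, holds because expanding the product leaves positive cross terms such as $\mu_1\mu_2$. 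This gives $\log\det(I+B)>\log(1+W)$.
\end{itemize}
Equivalently, the bound follows from strict concavity of $\log(1+x)$ on a majorized spectrum.

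For the ``more generally'' clause, I would use the expansion
\[
\det(I+M)=\sum_{S}\det M_{S},
\]
summing principal minors over all index sets $S$. For $M=\sum_i a_ir_ir_i^\top$, Cauchy--Binet rewrites each $k$-th elementary symmetric function of the eigenvalues as $\sum_{|T|=k}\big(\prod_{i\in T}a_i\big)\det(G_T)$. Here $G_T$ is the Gram matrix of $\{r_i\}_{i\in T}$, whose determinant is the squared $k$-volume of those directions. It is strictly positive exactly when those directions are linearly independent. Each such independent set therefore contributes a positive higher-order volume term.

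I expect the main obstacle to be stating item 4 rigorously rather than proving it. I would need to pin down ``same total weight'' as equal trace, and handle the possibility that the $r_i$ repeat or are collinear up to sign; that case is harmless because $rr^\top=(-r)(-r)^\top$. I would also make clear that strictness needs only one non-collinear pair with positive weight. With these conventions fixed, the rank argument and the Cauchy--Binet expansion go through routinely.
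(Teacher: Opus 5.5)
Your proof is correct, and items 1--3 follow the paper's argument essentially line for line. For item 4 you reach the strict inequality by a different route. The paper applies Sylvester's identity $\det(I_d+RAR^\top)=\det(I_m+\Gamma)$ with $\Gamma=A^{1/2}R^\top RA^{1/2}$. It then expands $\det(I_m+\Gamma)=1+\operatorname{tr}\Gamma+\sum_{|S|\ge2}\det\Gamma_S$ and gets strictness from the single $2\times2$ Gram minor $a_ja_k\bigl(1-(r_j^\top r_k)^2\bigr)>0$. You instead argue spectrally: $B$ has at least two positive eigenvalues summing to $W$, and the cross term $\mu_1\mu_2$ gives the strict gap.

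The paper's version ties the gain directly to the angle between two specific residuals and needs no rank lemma. Yours relies on $\operatorname{rank}B=\dim\operatorname{span}\{r_i\}$, which you assert without proof. Justify it in one line: $x^\top Bx=\sum_i a_i(r_i^\top x)^2$ vanishes iff $x\perp r_i$ for every $i$. In exchange, your majorization view shows more. Redundancy sensitivity holds for any strictly concave spectral objective $\sum_\ell f(\lambda_\ell)$ with $f(0)=0$, not only for the determinant's minor structure.

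For the ``more generally'' clause you arrive at the same Gram-volume expansion $\sum_T\bigl(\prod_{i\in T}a_i\bigr)\det G_T$ as the paper. You get there via Cauchy--Binet on the $d$-indexed principal minors, rather than via Sylvester on the $m\times m$ matrix $\Gamma$. Sylvester collapses your two Cauchy--Binet steps into one, but the conclusion is identical.

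One small bookkeeping point: drop triplets with $v^+=v^-$ (and zero weight) before defining $W$, as the paper does. Otherwise ``same total weight'' need not coincide with ``equal trace''.
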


\begin{proof}
For every triplet, $\rho(q,p,n),\eta(q,p,n)\in(0,1)$ because they are sigmoid
outputs. Since $T(q)\cap T(n)\subseteq T(q)$ and IDF weights are nonnegative,
$0\le C(q,n)\le 1$, so
Eq.~\eqref{eq:eci_lexical_residual_theory} gives
$\psi(q,n)=1-C(q,n)\in[0,1]$ (using the convention $C(q,n)=0$ when the
denominator vanishes). Hence
\[
    w(q,p,n)=\rho(q,p,n)\eta(q,p,n)\psi(q,n)\ge 0 .
\]
For any vector $x\in\mathbb R^d$,
\begin{multline*}
x^\top \bigl[ w(q,p,n)\, r(q,p,n)r(q,p,n)^\top \bigr] x \\
= w(q,p,n)\, x^\top r(q,p,n)r(q,p,n)^\top x \\
= w(q,p,n)\, \bigl(x^\top r(q,p,n)\bigr)^2 \ge 0.
\end{multline*}
Each weighted outer product is positive semidefinite. Nonnegative sums preserve
positive semidefiniteness, so
$\widehat{\mathcal{I}}_{\mathcal{D}}^{\mathrm{sem}}$ is positive semidefinite.

Let $\lambda_1,\ldots,\lambda_d$ be the eigenvalues of
$\widehat{\mathcal{I}}_{\mathcal{D}}^{\mathrm{sem}}$. Since the matrix is
positive semidefinite, $\lambda_\ell\ge 0$ for all $\ell$. Therefore
\[
    \eci(\mathcal D)
    =
    \log\det\!\left(
        I_d+
        \widehat{\mathcal{I}}_{\mathcal{D}}^{\mathrm{sem}}
    \right)
    =
    \sum_{\ell=1}^d \log(1+\lambda_\ell)
    \ge 0 ,
\]
with equality if and only if $\lambda_\ell=0$ for all $\ell$, which for a positive
semidefinite matrix holds if and only if
$\widehat{\mathcal{I}}_{\mathcal{D}}^{\mathrm{sem}}=0$.

The identity for $\rho$ follows directly from restricting the MNRL denominator
to the labeled positive $p$ and one explicit negative $n$:
\[
    \frac{\exp(u^\top v^+/\tau)}
    {\exp(u^\top v^+/\tau)+\exp(u^\top v^-/\tau)}
    =
    \sigma\!\left(
        \frac{u^\top v^+-u^\top v^-}{\tau}
    \right).
\]
Thus $\rho$ is the positive probability under the two-document
temperature-scaled target softmax.

For redundancy sensitivity, we argue on the weighted sum $\sum_j a_j r_j r_j^\top$;
the $1/N_{\mathcal D}$ factor in
$\widehat{\mathcal{I}}_{\mathcal{D}}^{\mathrm{sem}}$ is a common rescaling of the
$a_j$ and does not affect the comparison below. Triplets with zero weight or
with $v^+=v^-$ contribute zero outer products and drop out of the sum, so let
$r_1,\ldots,r_m$ denote the unit residual directions of the remaining triplets,
with positive weights $a_1,\ldots,a_m$ and $W=\sum_{j=1}^m a_j$. If all weight
is concentrated in one unit direction $r$, the only nonzero eigenvalue is $W$,
so
\[
    \log\det(I_d+Wrr^\top)=\log(1+W).
\]

If at least two positive-weighted directions are non-collinear, let
$R=[r_1,\ldots,r_m]$ and $A=\operatorname{diag}(a_1,\ldots,a_m)$, and write
$X=RA^{1/2}$, so that $RAR^\top=XX^\top$. Sylvester's determinant identity
$\det(I_d+XX^\top)=\det(I_m+X^\top X)$ gives
\[
    \det(I_d+RAR^\top)
    =
    \det(I_m+A^{1/2}R^\top R A^{1/2}).
\]
Let $G=R^\top R$ and $\Gamma=A^{1/2}GA^{1/2}=X^\top X$. Then $\Gamma$ is
positive semidefinite, $\Gamma_{jj}=a_j$, and
\[
    \operatorname{tr}(\Gamma)=\sum_{j=1}^m a_j=W.
\]
The determinant expansion of $I_m+\Gamma$ in principal minors gives
\[
    \det(I_m+\Gamma)
    =
    1+\operatorname{tr}(\Gamma)
    +
    \sum_{\substack{S\subseteq\{1,\ldots,m\}\\ |S|\ge 2}}
    \det(\Gamma_S),
\]
where $\Gamma_S$ is the principal submatrix indexed by $S$. All principal minors
are nonnegative. Non-collinearity gives a subset $S=\{j,k\}$ such that
\[
    \det(\Gamma_S)
    =
    a_j a_k
    \left(1-(r_j^\top r_k)^2\right)
    >
    0 .
\]
Consequently,
\[
    \det(I_d+RAR^\top)
    =
    \det(I_m+\Gamma)
    >
    1+W,
\]
and hence
\[
    \log\det(I_d+RAR^\top)>\log(1+W).
\]
Thus spreading the same total weight across non-collinear directions yields a
larger log-determinant than concentrating it in one direction. Higher-order
Gram principal minors add nonnegative volume terms and are positive for
linearly independent weighted directions.
\end{proof}

\subsection{Connection to Loss Reduction and Information Volume}
\label{sec:loss_reduction_information_volume}

We next relate the matrix form to contrastive optimization to motivate our design. 
The connection is local: exact for the two-document loss at frozen embeddings, 
and tied to parameter updates only through encoder linearization.

For one triplet, define the two-document loss
\begin{equation}
\begin{aligned}
    \ell_2(q,p,n)
    &=
    -\log \rho(q,p,n)
    \\
    &=
    \log\!\left(
        1+
        \exp\!\left(
            -\frac{u^\top(v^+-v^-)}{\tau}
        \right)
    \right).
\end{aligned}
\label{eq:pairwise_loss_connection}
\end{equation}
Let $\delta(q,p,n)=v^+-v^-$ and let
$r=\delta/\|\delta\|_2$ when $\delta\ne 0$.

\begin{proposition}[Pairwise gradient information]
\label{prop:pairwise_gradient_information}
Treat the normalized query embedding $u$ as an ambient Euclidean variable and
hold $v^+$ and $v^-$ fixed. If $\delta=v^+-v^-\ne 0$, then
\begin{equation}
    \nabla_u \ell_2(q,p,n)
    =
    -\frac{1-\rho(q,p,n)}{\tau}\,\delta(q,p,n),
    \label{eq:pairwise_query_gradient}
\end{equation}
and therefore
\begin{equation}
\begin{aligned}
    \nabla_u \ell_2(q,p,n)\nabla_u \ell_2(q,p,n)^\top
    &= 
    \frac{(1-\rho(q,p,n))^2}{\tau^2}
    \|v^+-v^-\|_2^2 \\
    &\quad \times\,
    r(q,p,n)r(q,p,n)^\top .
\end{aligned}
\label{eq:pairwise_gradient_outer_product}
\end{equation}
\end{proposition}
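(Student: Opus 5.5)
The plan is a direct chain-rule computation on the closed form in Eq.~\eqref{eq:pairwise_loss_connection}. Following the hypothesis, $u$ is treated as a free vector in $\mathbb R^d$, so the unit-norm constraint plays no role; $v^+$ and $v^-$ are constants. With $\delta=v^+-v^-$, write the scalar margin $m(u)=u^\top\delta/\tau$. Then $\ell_2=\log(1+e^{-m})=\operatorname{softplus}(-m)$ and $\rho=\sigma(m)$ by Eq.~\eqref{eq:pairwise_surrogate_event}.

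\textbf{Step 1: derivative in the margin.} First, differentiate in the scalar $m$:
\[
\frac{d\ell_2}{dm}=-\frac{e^{-m}}{1+e^{-m}}=-\bigl(1-\sigma(m)\bigr)=-(1-\rho).
\]
This uses the identity $1-\sigma(m)=\sigma(-m)=e^{-m}/(1+e^{-m})$.

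\textbf{Step 2: chain rule.} Next, use the linear map $u\mapsto m(u)$, whose gradient is $\nabla_u m=\delta/\tau$. The chain rule then gives
\[
\nabla_u\ell_2=-\frac{1-\rho}{\tau}\,\delta,
\]
which is Eq.~\eqref{eq:pairwise_query_gradient}.

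\textbf{Step 3: outer product.} For Eq.~\eqref{eq:pairwise_gradient_outer_product}, take the outer product of this vector with itself. The scalar coefficient enters squared, giving $(1-\rho)^2/\tau^2\cdot\delta\delta^\top$. Since $\delta\neq0$, the residual in Eq.~\eqref{eq:eci_pairwise_residual_theory} is $r=\delta/\|\delta\|_2$, so $\delta=\|\delta\|_2\,r$ and $\delta\delta^\top=\|v^+-v^-\|_2^2\,rr^\top$. Substituting this gives the stated identity. The hypothesis $\delta\ne0$ is needed only here, to make $r$ the normalized direction rather than the zero convention. The gradient formula itself holds even when $\delta=0$.

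There is no real obstacle. The only point needing care is that the gradient is Euclidean in the ambient space, not Riemannian on the sphere. If one instead projected onto the tangent space at $u$, the direction would become $(I-uu^\top)\delta$ and the clean rank-one form in $r$ would be lost. I will state explicitly that the proposition's hypothesis rules this out, and that $\rho$ is evaluated at the same $u$ as the gradient.
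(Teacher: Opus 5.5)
Your proof is correct and follows the paper's argument: the paper likewise sets $\Delta=u^\top\delta/\tau$, applies the chain rule with $\partial\ell_2/\partial\Delta=-(1-\rho)$ and $\nabla_u\Delta=\delta/\tau$, and then substitutes $\delta=\|\delta\|_2\,r$ into the outer product. Your side remarks also hold and agree with the paper's note on the Riemannian gradient: the gradient formula survives $\delta=0$, and it is the Euclidean (ambient) hypothesis that keeps the rank-one factor aligned with $r$.
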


\begin{proof}
Let $\Delta=\frac{u^\top\delta}{\tau}$. Then $\rho(q,p,n)=\sigma(\Delta)$ and
$\ell_2=\log(1+\exp(-\Delta))$.
Since $\frac{\partial \ell_2}{\partial \Delta} = -(1-\sigma(\Delta)) = -(1-\rho(q,p,n))$,
and $\nabla_u\Delta=\frac{\delta}{\tau}$, Eq.~\eqref{eq:pairwise_query_gradient} 
follows by the chain rule. Taking the outer product and using $\delta=\|\delta\|_2 r$ 
gives Eq.~\eqref{eq:pairwise_gradient_outer_product}.
\end{proof}

\noindent\emph{Remark.} If $u$ is instead constrained to the unit sphere, the
Riemannian gradient is the tangential projection
$(I_d-uu^\top)\nabla_u \ell_2(q,p,n)$; this changes the update direction but
not the $(1-\rho)$ scaling, so the discussion below is unaffected.

\medskip

Proposition~\ref{prop:pairwise_gradient_information} exposes a fundamental design choice. 
The raw gradient energy $\|\nabla_u \ell_2\|_2^2$ is proportional to $(1-\rho)^2$
(equivalently, the gradient magnitude is proportional to $1-\rho$), which
inherently favors low $\rho$ (i.e., ``hard'' negatives). However, small $\rho$
can also indicate inverted or mislabeled negatives with large but unreliable
gradients. 

\textbf{Crucially, \eci{} is intentionally not a gradient-energy estimator; 
it trades raw hardness for target-consistent validity.} 
By using the weight $\rho\,\eta\,\psi$ instead of the pure gradient-energy weight 
$\frac{(1-\rho)^2}{\tau^2}\|v^+-v^-\|_2^2\eta\psi$, $\rho$ acts as a target-consistency gate. 
Thus, \eci{} is a validity-weighted analogue of the gradient 
information matrix. \emph{(Note: The empirical section directly tests this tradeoff by 
comparing validity-weighted scoring against raw hardness weighting.)}

For a one-step loss-reduction statement, let $\mathcal L(\theta)$ be
differentiable with $\beta$-Lipschitz gradient, and let $g_{\mathcal D}$ be a
source-induced stochastic update at initialization. Smoothness gives
\begin{equation}
    \mathbb E[
        \mathcal L(\theta)-
        \mathcal L(\theta-\alpha g_{\mathcal D})
    ]
    \ge
    \alpha
    \left\langle
        \nabla_\theta \mathcal L(\theta),
        \mathbb E[g_{\mathcal D}]
    \right\rangle
    -
    \frac{\beta\alpha^2}{2}
    \mathbb E[\|g_{\mathcal D}\|_2^2].
    \label{eq:smooth_loss_reduction_bound}
\end{equation}
Without an alignment assumption, no pre-training scalar score can guarantee global 
loss reduction. Our gates simply favor examples more likely to align with valid supervision.

\paragraph{Information-Volume Intuition (Motivation).}
Under local encoder linearization, $\delta u_\theta(q)\approx J_q\delta\theta$, 
where $J_q$ is the query Jacobian at initialization. In a highly idealized, 
local linearized Gaussian model with isotropic prior precision $\lambda>0$, 
the embedding-space posterior covariance associated with a fixed per-negative 
source-sampling budget takes the form:
\begin{equation}
    \Sigma_{\mathcal D}^{\mathrm{emb}}
    =
    \left(
        \lambda I_d+
        \widehat{\mathcal{I}}_{\mathcal{D}}^{\mathrm{sem}}
    \right)^{-1}.
    \label{eq:embedding_posterior_covariance}
\end{equation}
Because $\widehat{\mathcal{I}}_{\mathcal{D}}^{\mathrm{sem}}$ is a per-negative
average, the information term does not accumulate with the file size
$N_{\mathcal D}$: the model describes a fixed effective observation budget per
source, which keeps the diagnostic comparable across candidate files of
different sizes. Under this specific interpretation, the negative log-determinant of the covariance is
\begin{equation}
    -\log\det \Sigma_{\mathcal D}^{\mathrm{emb}}
    =
    \log\det\!\left(
        \lambda I_d+
        \widehat{\mathcal{I}}_{\mathcal{D}}^{\mathrm{sem}}
    \right),
    \label{eq:posterior_information_volume}
\end{equation}
meaning the \eci{} log-determinant mirrors the local information-volume term when $\lambda=1$.
\textbf{This provides an intuitive motivation for \eci{} through reduced uncertainty volume, 
but it is not a distribution-free generalization guarantee.} A rigorous parameter-space 
bound would require strong, often unrealistic assumptions on encoder Jacobians, 
document-side gradients, label noise, and the optimization trajectory, which are 
beyond the scope of this structural analysis.

\section{Robustness Checks}
\label{sec:appendix_robustness}

These checks expand Section~\ref{sec:ablation_sample_size_stability}.

\subsection{Sample-Size Stability}

\eci{} rankings are unchanged at $25\%$, $50\%$, $75\%$, and $100\%$ of each
seeded sample (Table~\ref{tab:sample_size_rank_stability}). LLM remains highest
among non-hybrid sources, and Dense+LLM remains highest among hybrid sources.
The stability value of $1.000$ denotes exact agreement with the full-sample
rank order. This does not prove convergence for arbitrary datasets, but it
indicates that the observed ordering is not an artifact of using the maximum
available sample size in this experiment.

\subsection{Tokenizer, IDF, and Failure-Case Sensitivity}
\label{sec:ablation_tokenizer_idf_failure}

We test the implementation-sensitive gates $\eta$ and $\psi$ on $5{,}000$
seeded rows per source. The default lexical setting uses alphanumeric tokens and
pooled-document IDF. Tokenizer variants are alphanumeric, alphanumeric length at
least two, whitespace, and character trigrams. IDF variants are pooled
documents, source-specific documents, pooled query-document text, and uniform
weights.

Table~\ref{tab:tokenizer_idf_sensitivity} shows modest score shifts but stable
rankings: LLM $>$ Dense $>$ BM25 for non-hybrid sources, and Dense+LLM $>$
BM25+LLM $>$ BM25+Dense for hybrid sources. The largest tokenizer effects occur
when changing the granularity of lexical matching, especially for dense and
LLM-containing sources. IDF-corpus changes have smaller effects for LLM and
Dense+LLM than for BM25-heavy sources. This pattern is consistent with the role
of $\psi$: BM25 negatives often share many query terms, so their residual weight depends more strongly on the lexical coverage estimate. LLM negatives have lower mean coverage and higher mean $\psi$, making their \eci{} scores less tied to the exact IDF corpus.

Table~\ref{tab:eta_psi_failure_buckets} reports failure buckets. We use
$\rho<0.5$ for inversions, $\eta\le0.25$ for low locality, and $C\ge0.50$ for
high lexical coverage. Valid High-$C$ denotes $\rho\ge0.75$, $\eta\ge0.75$,
and $C\ge0.50$; Valid Low-$\eta$ denotes $\rho\ge0.75$, $\psi\ge0.75$, and
$\eta\le0.25$.

BM25-heavy sources have the highest high-coverage rates, so $\psi$ removes much
lexical shortcut signal. Non-hybrid LLM negatives have lower inversion and
coverage rates, consistent with their higher \eci{}. The valid high-coverage and
valid low-$\eta$ buckets expose cases where gates may suppress plausible
negatives, making failure modes auditable. The buckets also separate distinct
failure modes. An inversion indicates that the frozen target encoder already
prefers the candidate negative over the labeled positive, which can signal label noise, a false negative, or a model-specific weakness. Low $\eta$ indicates that the negative is not close to the positive relative to the query, so it may be too easy or topically displaced. High lexical coverage indicates that the example may be solvable by query-term overlap alone. These categories do not prove that an individual negative is invalid, but they provide measurable audit groups for inspecting when $\eta$ and $\psi$ help and when they may be overly conservative.

\subsection{Temperature Sweep}
\label{sec:ablation_temperature_sweep}

Figure~\ref{fig:temperature_sweep} evaluates
$\tau\in\{0.03,0.05,0.1\}$. Temperature changes absolute scale but not the
within-setting rank order, supporting $\tau=0.05$ as a stable default. Smaller
$\tau$ sharpens the gates and gives more weight to confident pairwise
differences; larger $\tau$ smooths those differences. The unchanged ordering
shows that the source ranking is not driven by a narrow temperature choice.

\begin{table}[!ht]
    \centering
    \small
    \caption{Rank stability as sampled rows increase, using lexical-residual
    pairwise \eci{}. Every fraction reproduces the full-sample order, LLM $>$
    Dense $>$ BM25 for non-hybrid and Dense+LLM $>$ BM25+LLM $>$ BM25+Dense
    for hybrid; $1.000$ denotes exact agreement with it.}
    \label{tab:sample_size_rank_stability}
    \begin{tabular}{lcccc}
        \toprule
        \textbf{Setting} & $25\%$ & $50\%$ & $75\%$ & $100\%$ \\
        \midrule
        Non-hybrid & 1.000 & 1.000 & 1.000 & \textbf{1.000} \\
        Hybrid & 1.000 & 1.000 & 1.000 & \textbf{1.000} \\
        \bottomrule
    \end{tabular}
\end{table}

\begin{figure}[!ht]
    \centering
    \includegraphics[width=\linewidth]{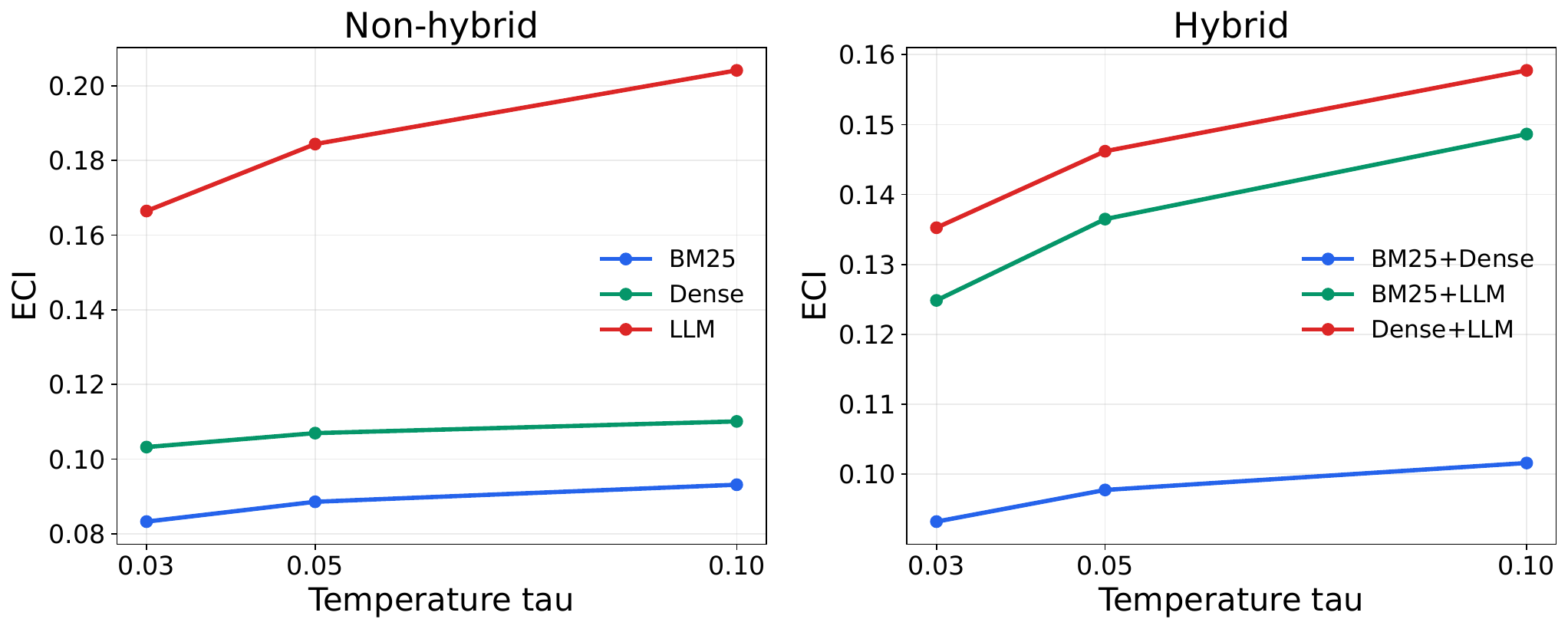}
    \caption{Temperature sweep for lexical-residual pairwise
    \eci{}. Rank order is stable within each setting.}
    \label{fig:temperature_sweep}
\end{figure}

\begin{table}[!ht]
    \centering
    \scriptsize
    \setlength{\tabcolsep}{3pt}
    \caption{Failure-bucket rates for $\eta$ and $\psi$, as percentages of
    explicit negatives in the $5{,}000$ row sensitivity sample, with frozen
    \texttt{prdev/mini-gte} embeddings. Upper block non-hybrid, lower block
    hybrid.}
    \label{tab:eta_psi_failure_buckets}
    \begin{tabular}{lccccc}
        \toprule
        \textbf{Source} &
        \textbf{$\rho<0.5$} &
        \textbf{$\eta\le0.25$} &
        \textbf{$C\ge0.50$} &
        \makecell{\textbf{Valid}\\\textbf{high-$C$}} &
        \makecell{\textbf{Valid}\\\textbf{low-$\eta$}} \\
        \midrule
        BM25  & 16.8 & 43.5 & 89.6 & 7.9 & $<0.1$ \\
        Dense & 33.5 & 41.4 & 69.3 & 4.6 & 2.9 \\
        LLM   & 2.1  & 39.6 & 38.0 & 4.1 & 2.6 \\
        \midrule
        BM25+Dense & 25.2 & 42.4 & 79.5 & 6.2 & 1.5 \\
        BM25+LLM   & 9.5  & 41.6 & 63.8 & 6.0 & 1.3 \\
        Dense+LLM  & 17.7 & 40.5 & 53.4 & 4.3 & 2.8 \\
        \bottomrule
    \end{tabular}
\end{table}

\begin{table*}[!ht]
    \centering
    \small
    \setlength{\tabcolsep}{4pt}
    \caption{Tokenizer and IDF-corpus sensitivity for $\psi$. Baseline uses
    alphanumeric tokenization and pooled document IDF. Ranges vary one factor at
    a time. Computed on $5{,}000$ seeded rows per source with frozen
    \texttt{prdev/mini-gte} embeddings.}
    \label{tab:tokenizer_idf_sensitivity}
    \begin{tabular}{llccccc}
        \toprule
        \textbf{Setting} & \textbf{Source} &
        \textbf{Base \eci{}} &
        \textbf{Tokenizer Range} &
        \textbf{IDF Range} &
        \textbf{Mean $C$} &
        \textbf{Mean $\psi$} \\
        \midrule
        \multirow{3}{*}{Non-hybrid}
            & BM25  & 0.0885 & 0.0882 to 0.0934 & 0.0884 to 0.1020 & 0.712 & 0.288 \\
            & Dense & 0.1086 & 0.0926 to 0.1118 & 0.1085 to 0.1131 & 0.613 & 0.387 \\
            & LLM   & \textbf{0.1860} & \textbf{0.1720 to 0.1900} & \textbf{0.1857 to 0.1866} & 0.469 & 0.531 \\
        \midrule
        \multirow{3}{*}{Hybrid}
            & BM25+Dense & 0.0986 & 0.0914 to 0.1026 & 0.0985 to 0.1076 & 0.662 & 0.338 \\
            & BM25+LLM   & 0.1373 & 0.1312 to 0.1417 & 0.1371 to 0.1438 & 0.590 & 0.410 \\
            & Dense+LLM  & \textbf{0.1478} & \textbf{0.1328 to 0.1514} & \textbf{0.1477 to 0.1497} & 0.539 & 0.461 \\
        \bottomrule
    \end{tabular}
\end{table*}

\section{LLM Hard-Negative Generation Prompt}

\begin{mdframed}[style=promptbox]
{\small\color{prompttitle}\textbf{\textsf{Prompt}}}
\hrule height 0.4pt
\begin{lstlisting}[style=promptstyle]
Assume you are an expert in {domain_name}, and there is a example
with a "user_query" and its related doc
"positive_document".
example: {positive_example}

[Task Definition]
Your task is to write {num_hard_negatives} hard negative
samples in JSON format. The JSON object
must contain the following keys:

- "reasoning": a string, reasoning steps
  on how to generate {num_hard_negatives}
  hard negative documents.

{key_bullets}

[Reasoning Definition]

- Write the inference process step by step
  in "reasoning", including how to associate
  from the "user_query" and "positive_document"
  to get the hard-negative documents.

[Hard Negatives Definition]

- All the hard negative documents should
  use similar keywords or topics as the
  "positive_document".

- All the hard negative documents appear
  to address the "user_query" at first glance.
  However, subtly diverges in content or
  context such that it does not truly answer
  the query or meet the user's information need.

- All the hard negative documents should
  be plausible and accurate documents; they
  should be diverse in topic, sources, and
  styles.

[Attributes Definition]

- All the negative documents should be in
  the education level of {difficult_level}
  to comprehend, and the length should be
  {length} the "positive_document".

[Format Definition]

- Your output must always be a JSON object
  only; do not explain yourself or output
  anything else.
\end{lstlisting}
\end{mdframed}

\end{document}